\documentclass[runningheads,envcountsame]{llncs}

\usepackage[utf8]{inputenc}
\usepackage[T1]{fontenc}
\usepackage{graphicx} % Required for inserting images
\usepackage{amsfonts, amsmath, amssymb, mathrsfs}
\usepackage{color}
\usepackage{tikz}
\usepackage{contour}
\usepackage{bbold}
\usepackage{bbding}
\usepackage{colonequals}
\usepackage{etoolbox}
\usepackage[noadjust]{cite}
\usepackage[noxy]{virginialake}
\usepackage{bussproofs} \EnableBpAbbreviations
\usepackage{booktabs}
\usepackage{relsize}
\usepackage{adjustbox}
\usepackage{multirow}
\usepackage{paralist}
\usepackage{url}

\usepackage{hyperref}
\usepackage[capitalise]{cleveref}
\crefname{section}{Sect.}{Sects.}
\crefname{appendix}{App.}{Apps.}
\crefname{definition}{Def.}{Defs.}
\crefname{proposition}{Prop.}{Props.}
\Crefname{section}{Section}{Sections}
\Crefname{appendix}{Appendix}{Appendices}
\Crefname{definition}{Definition}{Definitions}
\Crefname{proposition}{Proposition}{Propositions}

\catcode`@=11
\def\Left#1#2\Right{\begingroup%
   \def\ts@r{\nulldelimiterspace=0pt \mathsurround=0pt}%
   \let\@hat=#1%
   \def\sht@im{#2}%
   \def\@t{{\mathchoice{\def\@fen{\displaystyle}\k@fel}%
          {\def\@fen{\textstyle}\k@fel}%
          {\def\@fen{\scriptstyle}\k@fel}%
          {\def\@fen{\scriptscriptstyle}\k@fel}}}%
   \def\g@rin{\ts@r\left\@hat\vphantom{\sht@im}\right.}%
   \def\k@fel{\setbox0=\hbox{$\@fen\g@rin$}\hbox{%
      $\@fen \kern.4\wd0 \copy0 \kern-.4\wd0%
      \llap{\copy0}\kern.4\wd0$}}%
      \def\pt@h{\mathopen\@t}\pt@h\sht@im%
      \Right}%
\def\Right#1{\let\@hat=#1%
   \def\st@m{\mathclose\@t}%
   \st@m\endgroup}
\catcode`@=12

\newcommand{\K}{{\sf K}}
\newcommand{\B}{{\sf B}}
\newcommand{\Lgk}{{\sf L}}
\newcommand{\Sfive}{{\sf S5}}

\newcommand{\CSfive}{\sharp{\sf S5}_n}

\newcommand{\TCSfive}{\overline{\sharp{\sf S5}}_n}

\newcommand{\ce}{\colonequals}
\newcommand{\cce}{\coloncolonequals}

\newcommand{\prop}{\mathrm{Prop}}
\newcommand{\agt}{\mathrm{Agt}}

\newcommand{\vlfill}[1]{\{#1\}}
\renewcommand{\vlhole}{\vlfill{\:}}
\newcommand{\id}{\mathrm{id}}
\newcommand{\hasbracket}{\nearrow}
\newcommand{\hasparent}{\searrow}
\newcommand{\nobracket}{\not\rightsquigarrow}

\newcommand{\seqint}{\Left[\cdot\Right]}
\newcommand{\labint}[1]{\Left[#1\Right]}
\newcommand{\onlyb}{\aleph}
\newcommand{\La}{\mathbb{L}}

\newcommand{\Var}{\textup{\texttt{Var}}}
\newcommand{\Varp}{{\textup{\texttt{Var}}^+}}
\newcommand{\Varn}{{\textup{\texttt{Var}}^-}}
\newcommand{\Varm}{{\textup{\texttt{Var}}^*}}
\newcommand{\Ag}{\textup{\texttt{Ag}}}
\newcommand{\Col}{\textup{\texttt{Col}}}
\newcommand{\Colp}{{\textup{\texttt{Col}}^+}}
\newcommand{\Coln}{{\textup{\texttt{Col}}^-}}
\newcommand{\Colm}{{\textup{\texttt{Col}}^*}}

\newcommand{\fd}{\mathit{fd}}

\newcommand{\labe}[2]{{#1}\mathord{:}{#2}}
\newcommand{\spdisj}{\mathbin{\varovee}}
\newcommand{\spconj}{\mathbin{\varowedge}}
\newcommand{\bigspdisj}{\mathop{\mathlarger{\mathlarger{\mathlarger{\mathlarger{\varovee}}}}}\limits}
\newcommand{\bigspconj}{\mathop{\mathlarger{\mathlarger{\mathlarger{\mathlarger{\varowedge}}}}}\limits}
\newcommand{\rsf}{\mathsf{r}}
\newcommand{\interp}[2]{#1\xrightarrow{\textup{\prp{\tiny IP}}}#2}

\newcommand{\LIPinterp}[2]{#1\xrightarrow{\textup{\prp{\tiny LIP}}}#2}
\newcommand{\ALIPinterp}[2]{#1\xrightarrow{\textup{\prp{\tiny ALIP}}}#2}

\newcommand{\prp}[1]{{\sf #1}}

\newcommand{\notp}{\neg p}
\renewcommand{\phi}{\varphi}
\newcommand{\M}{\mathcal{M}}

\newcommand{\CS}{\mathcal{S}}
\newcommand{\CT}{\mathcal{T}}
\newcommand{\CU}{\mathcal{U}}

\newcommand\newthanks[1]{%
  \begingroup
  \begin{NoHyper}
  \renewcommand\thefootnote{}\footnote{#1}%
  \end{NoHyper}
  \addtocounter{footnote}{-1}%
  \endgroup
}

\begin{document}
 \title{\texorpdfstring{Agent Interpolation for  Knowledge}{Agent Interpolation for  Knowledge}}
 \titlerunning{Agent Interpolation for  Knowledge}
\author{Marta B\'{\i}lkov\'a\orcidID{0000-0002-3490-2083} \and  
Wesley Fussner\orcidID{0000-0002-6025-8770} \and  
Roman Kuznets\orcidID{0000-0001-5894-8724}\Envelope}
\authorrunning{M. B\'\i{}lkov\'a \and W. Fussner \and R. Kuznets}
\institute{Institute of Computer Science of the Czech Academy of Sciences, \email{\{bilkova,fussner,kuznets\}@cs.cas.cz}}

\maketitle
\begin{abstract}
We define a new type of proof formalism for multi-agent modal logics with S5-type modalities. This novel formalism combines the features of hypersequents to represent S5 modalities with nested sequents to represent the T-like modality alternations. We show that the calculus is sound and complete, cut-free, and terminating and yields  decidability and the finite model property for multi-agent S5. We also use it to prove the Lyndon (and hence Craig) interpolation property for multi-agent S5, considering not only propositional atoms but also agents to be part of the common language. Finally, we discuss the difficulties on the way to extending these results to the logic of distributed knowledge and to deductive interpolation. 

\keywords{multi-agent systems 
\and interpolation \and modal logic}
\end{abstract}

\section{Introduction}\newthanks{The authors were supported by the grant 22-23022L CELIA of Grantov\'{a} Agentura \v{C}esk\'{e} Republiky.}
\label{section:introduction}
One logic often has multiple proof-theoretic representations. In most cases, it is not reasonable to see one of them as being objectively better than all others in all respects. A Hilbert-style calculus could be good for proving semantical completeness but give little help for establishing decidability. A cut-free sequent calculus works much better for decidability but may be unsuitable to show the Craig interpolation property (\prp{CIP}) even if it does hold (cf.,~e.g.,~\cite{Dyckhoff99} for the G\"odel--Dummett logic). If no cut-free calculus exists, an analytic sequent calculus could be one of the simplest alternatives and may even be sufficient for establishing the \prp{CIP}, but the Lyndon interpolation property (\prp{LIP}) would require something else, e.g.,~hypersequents. One-size-fits-all proof calculi are exceedingly rare.

The choice of calculus is, therefore, dictated, on the one side, by applications and, on the other side, by the range of logics to be applied to. Quite often, the choices can be mapped onto a line between generality and efficiency. The more logics a proof formalism covers, the less likely it is to be efficient for each one of them. Labelled sequents~\cite{Negri05} are on the more general part of the spectrum, with the methods described in~\cite{NegrivP11} surely sufficient to treat a logic like multi-agent~$\Sfive_n$. The goal of this work is to develop a proof formalism tailor-made for~$\Sfive_n$ and extract from it a number of results such as decidability and interpolation for both propositional atoms (for which we obtain Lyndon interpolation) and agents (for which we obtain Craig interpolation). We also discuss interpolation for distributed knowledge and agent interpolation for the global consequence relation, which prove more subtle.

The format we develop is a hybrid of \emph{hypersequents}~\cite{Minc71,Pottinger83,Avron96} to represent equivalence classes of individual agents and \emph{nested sequents}~\cite{Bruennler09,Poggiolesi09,LellmannP24} to represent the tree structure for alternating agent modalities. Since ``nested hypersequents'' could be confused with nested sequents, which are sometimes called \emph{tree-hypersequents}~\cite{Poggiolesi09}, we will use a fresh term \emph{crossword sequents} because our proof structures resemble (multidimensional) crosswords, with multiple hypersequents intersecting in a sequent component, shortening it to \emph{cross-sequents} because they are a cross between hypersequents and nested sequents.

\section{Preliminaries}
\label{sect:formalism}

We work with the multi-agent modal language in the negation normal form~(NNF). The latter choice enables us to use one-sided sequents. This choice is inessential and can be easily  changed to the full language with negation and two-sided sequents. Formulas are   defined by the grammar $\phi \cce \bot\mid \top\mid p \mid \notp \mid (\phi \wedge \phi) \mid (\phi \vee \phi) \mid \Box_a \phi \mid \Diamond_a \phi$ where $p \in \prop$ is a (propositional) atom, $\notp$ is the negation of $p$, with both called \emph{literals}, and $a \in A$ is an agent for some countably infinite $\prop$ and finite  $A\ne \varnothing$. The total number of agents is~$n$. Negation $\neg$ can be recursively extended from  atoms to arbitrary formulas using standard De~Morgan dualities, e.g., $\neg\Box_a \phi \ce \Diamond_a \neg\phi$. Implication $\phi \to \psi \ce \neg\phi\lor \psi$. The variables and agents occurring (positively/negatively) in a formula are defined in the standard way, with $\Col \in \{\Var, \Ag\}$, $\circ\in \{\wedge, \vee\}$, $\ast\in\{+,-\}$,  $\heartsuit \in\{\Box, \Diamond\}$,  and $\Col(\phi)\ce \Colp(\phi) \cup \Coln(\phi)$: $\Var(\bot)=\Var(\top)=\Ag(\bot)=\Ag(\top)=\Ag(p)=\Ag(\notp)\ce\varnothing$, so that in these cases the sets $\Varp$ and $\Varn$ of positively and negatively occurring variables  are also empty; further, $\Colm(\phi \circ \psi) \ce \Colm(\phi) \cup \Colm(\psi)$; $\Varm(\heartsuit_a \phi) \ce \Varm(\phi)$; $\Varp(p)=\Varn(\notp)\ce\{p\}$; $\Varn(p)=\Varp(\notp)\ce\varnothing$; $\Ag(\heartsuit_a \phi)\ce \{a\} \cup \Ag(\phi)$. Modal depth $\fd(\phi)$ is also defined recursively: $\fd(\top)=\fd(\bot)=\fd(p)=\fd(\notp)\ce0$, $\fd(\phi\circ\psi)\ce \max(\fd(\phi),\fd(\psi))$, and $\fd(\heartsuit_a \phi)\ce \fd(\phi)+1$. 

Modal logic~$\Sfive_n$ is the logic of multi-agent Kripke frames with all  accessibility relations being equivalence relations. 
\begin{definition}[Semantics]
A structure $\M=(W,R,V)$ is an \emph{(epistemic Kripke) model}
if{f} $W \ne \varnothing$ is a set of worlds, $R$ maps each agent $a \in A$ to an equivalence relation\footnote{An \emph{equivalence relation} $R$ on $W$ is reflexive, symmetric, and transitive, i.e., $w R w$, and $v R w$ whenever $w R v$, and $w R u$ whenever $w R v$ and $v R u$ for all $w,v,u \in W$.} $R_a \subseteq W \times W$ on $W$, and $V$ maps each  atom $p\in \prop$ to a set $V(p) \subseteq W$ of worlds where $p$ is true.  $R_a(w) \ce \{ v 
\in W \mid w R_a v\}$.
Each $R_a$ partitions $W$ into pairwise disjoint \emph{clusters} $\{R_a(w) \mid w \in W\}$, where an $a$-cluster is a maximal totally $R_a$-connected set $U\subseteq W$, i.e., $(u,v)\in R_a$ for all $u,v \in U$ and  $(u,w)\notin R_a$ for all $u \in U$ and $w \notin U$. $uR_av$ means $(u,v) \in R_a$.

Truth of a formula $\phi$ in a world $w$ of a model $\M$ is defined recursively: 
$\M, w \nvDash \bot$; $\M, w \vDash \top$;  $\M, w \vDash p$ if{f} $w \in V(p)$; $\M, w \vDash \notp$ if{f} $w \notin V(p)$; $\M, w \vDash \phi_1 \wedge \phi_2$ if{f} $\M, w 
\vDash \phi_i$ for both $i=1,2$; $\M, w \vDash \phi_1 \vee \phi_2$ if{f} $\M, w 
\vDash \phi_i$ for at least one $i=1,2$; $\M,w\vDash \Box_a \phi$ if{f} $\M, v \vDash \phi$ for all $v \in R_a(w)$; and $\M,w\vDash \Diamond_a \phi$ if{f} $\M, v \vDash \phi$ for at least one $v \in R_a(w)$. A formula $\phi$ is $\Sfive_n$-\emph{valid}, written $\phi \in \Sfive_n$ if{f} it is true in all worlds of all epistemic models.
\end{definition}

The Hilbert-style axiomatization of $\Sfive_n$ via factivity and positive and negative introspection is well known (see, e.g.,~\cite{FaginHMV95}). 
It is axiomatized by any complete set of axioms for the classical propositional logic, rule \emph{modus ponens}, and the following axioms and rules for each agent $a \in A$: 
 \begin{gather*}
     \Box_a (\phi \to \psi) \to (\Box_a \phi \to \Box_a \psi)
     \qquad
     \Box_a \phi \to \phi
     \qquad
     \frac{\phi}{\Box_a \phi}
     \\
     \Box_a \phi \to \Box_a\Box_a \phi
     \qquad
     \Diamond_a \phi \to \Box_a\Diamond_a \phi
 \end{gather*}

 \begin{theorem}[Semantic Completeness]
     $\vdash_{\Sfive_n} \phi$ if{f} $ \phi \in \Sfive_n$.
 \end{theorem}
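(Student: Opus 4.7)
The plan is to prove soundness and completeness separately. Soundness ($\vdash_{\Sfive_n} \phi$ implies $\phi \in \Sfive_n$) goes by routine induction on the length of the Hilbert derivation: one verifies that each axiom is valid on every frame whose accessibility relations are equivalences. The K axiom and \emph{modus ponens} are generic, T needs reflexivity of each $R_a$, 4 needs transitivity, axiom 5 holds because every equivalence is Euclidean, and necessitation preserves validity because if $\phi$ is true at every world of every epistemic model then so is $\Box_a \phi$.

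For completeness ($\phi \in \Sfive_n$ implies $\vdash_{\Sfive_n} \phi$), I would use the canonical model construction. Call a set $\Sfive_n$-\emph{consistent} if it does not prove $\bot$, and prove Lindenbaum's lemma: every consistent set extends to a maximally consistent set (MCS). The \emph{canonical model} $\M^c=(W^c,R^c,V^c)$ then has $W^c$ equal to the set of all MCSs, $V^c(p)=\{w \in W^c : p \in w\}$, and $wR^c_a v$ if{f} $\{\psi : \Box_a\psi \in w\} \subseteq v$. The next step is to check that each $R^c_a$ is an equivalence: reflexivity follows from T, transitivity from 4, and the Euclidean property from axiom 5, which combined with reflexivity yields symmetry. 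Then one proves the truth lemma $\M^c, w \vDash \phi$ if{f} $\phi \in w$ by induction on the complexity of~$\phi$, the modal case relying on the standard existence lemma: if $\Diamond_a \psi \in w$, then $\{\chi : \Box_a \chi \in w\} \cup \{\psi\}$ is consistent and hence extends to some MCS $v$ with $w R^c_a v$. Finally, if $\nvdash_{\Sfive_n} \phi$, then $\{\neg \phi\}$ is consistent and extends to some MCS $w$, at which $\phi$ fails by the truth lemma, whence $\phi \notin \Sfive_n$.

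The main obstacle, insofar as there is one, is the symmetry check for $R^c_a$: showing that $\{\chi : \Box_a \chi \in w\} \subseteq v$ implies $\{\chi : \Box_a \chi \in v\} \subseteq w$ is the step where axiom 5 really does the work, via the familiar Euclidean argument. Everything else is a direct adaptation of the single-agent case, one copy per agent, since the axioms and rules decouple across agents and the canonical relations are defined independently. Because this is a classical preliminary and the novelty of the paper lies in the proof-theoretic machinery to come, I would simply reference the textbook treatment in Fagin et al.~\cite{FaginHMV95} and not reproduce the argument in detail.
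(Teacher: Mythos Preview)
Your proposal is correct and matches the paper's treatment: the paper does not prove this theorem at all but simply states it as well known, pointing to~\cite{FaginHMV95}, exactly as you suggest doing in your final paragraph. Your sketch of the canonical-model argument is accurate and more than the paper itself provides.
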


The goal of this paper is to provide a proof-theoretic description of~$\Sfive_n$.

A \emph{(one-sided)  sequent} is a finite set of formulas, written $\Delta=\phi_1,\dots,\phi_k$. It is  true at a world $w$ of model $\M$, written $\M, w \vDash \Delta$, if{f} $\M, w \vDash \phi_i$ for some~$\phi_i$, i.e., there is a formula interpretation $\iota_s(\Delta) \ce \bigvee\Delta=\bigvee_{i=1}^k\phi_i$ such that $\M, w \vDash \Delta$ if{f} $\M, w \vDash \iota_s(\Delta)$. No cut-free \emph{sequent} calculus has been found for the single-agent~$\Sfive_1$. One way to avoid using the cut rule is to employ  \emph{hypersequents}~\cite{Avron96}, which  are multisets of sequents. A hypersequent $H=\Delta_1 \mid \dots \mid \Delta_m$  has the formula interpretation $\iota_h(H)\ce \bigvee_{j=1}^m \Box \iota_s(\Delta_j)$. Thus, falsifying a hypersequent means finding a cluster containing worlds $w_1,\dots,w_m$ such that $\M, w_j \nvDash \Delta_j$ for each $j=1,\dots,m$.

The multiset sequent structure is sufficient  for $\Sfive_1$ because it matches the accessibility structure of its models, but many modal logics require more structure. For instance, \emph{nested sequents}~\cite{LellmannP24}, which are trees of sequents, are not as general as labelled sequents, but provide a robust and purely syntactical representation for many logics of note, including  $\K_n$. The accessibility relations for $\K_n$ have  no additional properties, which is relevant for $\Sfive_n$ because its alternating modalities exhibit certain $\K$-like properties. For instance, the common explanation for the  complexity jump from (co-)NP-complete for $\Sfive_1$ to PSPACE-compelete for $\Sfive_n$ for $n>1$~\cite[Table 3.2]{FaginHMV95} is that logic $\K$ is PSPACE-complete and the modality combination $\Box_1\Box_2$  behaves like a $\K$-modality.  

Combining these intuitions, we propose a tree of hypersequents to be the simplest data structure matching the accessibility structure of Kripke models for  $\Sfive_n$. There is  one more puzzle piece to discuss before giving  formal definitions. Logic~$\K_n$ can be captured by sequents without resorting to the nested tree structure. The corresponding tableaus are called \emph{destructive}~\cite{HandbookTM99} in the sense that information can be lost. Intuitively, after exploring one world, a non-invertible sequent rule is applied to move to an accessible world and never look back. Not looking back would, however, seem problematic for symmetric accessibility relations of~$\Sfive_n$, which may explain why no cut-free sequent calculus is known to capture symmetric logics like $\Sfive$ or $\B$ (even for a single agent). In the nested hypersequent format considered here all rules will be invertible, facilitating the  flow of information from the leaves back to the root. 

\section{Crossword Sequents}
\label{sect:crossword}
\emph{Crossword sequents} $\CS$, or \emph{cross-sequents} for short, their depth~$d(\CS)$, and formula interpretation $\iota(\CS)$ are defined recursively, where $\Delta$ is a sequent, usually called \emph{(sequent) component}, and $H$ is a \emph{hypersequent}, understood here as a multiset of cross-sequents:
\begin{compactitem}
    \item a sequent $\Delta$ is a cross-sequent with $d(\Delta)=0$ and  $\iota(\Delta)\ce \iota_s(\Delta)$;
    \item if $a \in A$ is an agent and $H=\CS_1|\dots|\CS_m$ is a hypersequent, then $\CS, [H]_a$ is a cross-sequent with $d(\CS, [H]_a) \ce \max(d(\CS), d(H)+1)$ and $\iota(\CS, [H]_a) \ce \iota(\CS) \vee \bigvee_{j=1}^m \Box_a\iota(\CS_j)$;
    \item a cross-sequent $\CS$ is a (singleton) hypersequent with depth $d(\CS)$;
    \item  $H \mid \CS$ is a  hypersequent of depth $d(H | \CS) \ce \max(d(H),d(\CS))$.
\end{compactitem}
(We do not define formula interpretations for hypersequents as it depends on the agent index of the surrounding bracket.)
Thus, each cross-sequent $\CS$ has a unique component as its root $r(\CS)$, whereas a  hypersequent $H$ has several roots. Each component $\Delta$ of $\CS$ other than the root $r(\CS)$ belongs to an agent-indexed bracket. The root $r(\CS, [H]_a)$ is the \emph{$a$-parent} of all roots of $H$, which are its \emph{$a$-children}, and are \emph{$a$-siblings} among themselves. $\CS \hasbracket a$ denotes that $r(\CS)$~has an $a$-child, otherwise $\CS \nobracket a$. The parent/child/sibling relationships among components are preserved when embedded into a larger cross-sequent. For a  component~$\Omega$, we write $\Omega \in a$ if{f}  $\Omega$ has an $a$-parent and $\Omega \notin a$ otherwise. For bookkeeping purposes, we will also label individual components by unique non-negative integers. In such cases, we write $\CS_\ell$ for the cross-sequent~$\CS$ supplied with  the labeling~$\ell$ and  denote the label (name) of $\Delta$ by $\ell(\Delta)$, with $\ell(r(\CS))$ always set to~$0$.\footnote{One could use labels with structure reflecting the tree structure of cross-sequents, cf., e.g., sequences of integers used in prefixed tableaus to represent nested sequents~\cite{Fitting12}.  We opted for arbitrary labels here to avoid cumbersome label management. } Distinct metavariables $\Delta_1$ and $\Delta_2$ over sets are used even when $\Delta_1=\Delta_2$ as sets of formulas, making the notation $\ell(\Delta)$ unambiguous within the same cross-sequent. $\La(\CS)$ denotes the set of all label of the components of~$\CS$. $\fd(\CS)$ is defined as the maximum over $\fd(\phi)$ over all formulas $\phi$ occurring in components of $\CS$.
\begin{example}
    The following cross-sequent $\CS$ consists of 10 sequent components:
    \begin{equation}
    \label{ex:cross}
       \CS= \Delta, \Bigl[\Gamma_1, [\Theta_1|\Theta_2|\Theta_3]_b  \Big| \Gamma_2\Bigr]_a, \Bigl[\Lambda, \bigl[\Xi_1|\Xi_2, [\Pi]_a\bigr]_c\Bigr]_b,
    \end{equation}
    which could be labeled by numbers from 0 to 9.
    Component $\Delta$ is the root $r(\CS)$ with $\ell(\Delta)=0$. $\Xi_2$ is the $a$-parent of $\Pi$, and  $\Gamma_1$ is the $b$-parent of  $\Theta_1$, $\Theta_2$, and~$\Theta_3$. Note that neither $\Xi_1$  nor $\Lambda$ is a parent of $\Pi$.  $\Theta_1$, $\Theta_2$, and $\Theta_3$ are $b$-siblings and each of them is a $b$-child of $\Gamma_1$. $\CS \hasbracket a$ and $\CS \hasbracket b$, but $\CS \nobracket c$. 
    $\Pi \in a$, $\Xi_1 \in c$, and $\Lambda \in b$, but $\Pi \notin c$. 
\end{example}
\begin{definition}[Consolidation conditions] A cross-sequent $\CS$ is \emph{proper} if{f} the following \emph{consolidation conditions} are satisfied:
\begin{compactenum}[(a)]
\item\label{cond:noasiblings} two $a$-children of the same component must be $a$-siblings, i.e., one  component cannot contain more than one $a$-bracket per agent $a$; 
\item\label{cond:noachildren} an $a$-child cannot be an $a$-parent, i.e., a  component within an $a$-bracket itself cannot contain $a$-brackets. 
\end{compactenum}
\end{definition}
For instance, $\Delta, [\CS]_a, [\CS']_a$ and $\Delta, [\CS, [\CS']_a]_a$ are not proper cross-sequents:  one should use $\Delta, [\CS | \CS']_a$ instead. From now on all cross-sequents are assumed to be proper, as was~\eqref{ex:cross}.

Like worlds in a model, we combine components in a cross-sequent into \emph{component clusters}. They will play a role in the proof calculus.
\begin{definition}[Component cluster]
In a cross-sequent~$\CS$ (with component labeling $\ell$), component $\Delta$ belongs to $[\Gamma]_a$, the \emph{$a$-component cluster} of component~$\Gamma$,  if{f} (a)~$\ell(\Delta)=\ell(\Gamma)$, i.e., $\Delta$ is $\Gamma$ itself, or $\Delta$ is (b)~its $a$-parent, or (c)~its $a$-sibling, or (d)~its $a$-child. All components of $\CS$ are partitioned into $a$-clusters~$[\Gamma]_a$. An  $a$-cluster consisting of exactly one component is  \emph{trivial}. Thus, a \emph{non-trivial} $a$-cluster has at least two components, one of which is the $a$-parent of all other components in the cluster and called the \emph{$a$-parent of the cluster}.
\end{definition}

\begin{example}[Clusters in cross-sequents]
The clusters from~\eqref{ex:cross} are listed below with non-trivial clusters boldfaced. The $a$-clusters are $\mathbf{\{\Delta, \Gamma_1,\Gamma_2\}}$, $\{\Theta_1\}$, $\{\Theta_2\}$, $\{\Theta_3\}$, $\{\Lambda\}$, $\{\Xi_1\}$, and $\mathbf{\{\Xi_2, \Pi\}}$. The $b$-clusters are $\mathbf{\{\Delta,\Lambda\}}$, $\mathbf{\{\Gamma_1,\Theta_1,\Theta_2,\Theta_3\}}$, $\{\Gamma_2\}$, $\{\Xi_1\}$,  $\{\Xi_2\}$, and $\{\Pi\}$. Finally, the $c$-clusters are
$\{\Delta\}$, $\{\Gamma_1\}$, $\{\Theta_1\}$, $\{\Theta_2\}$, $\{\Theta_3\}$, $\{\Gamma_2\}$,  $\mathbf{\{\Lambda,\Xi_1,\Xi_2\}}$, and $\{\Pi\}$. 
The crossword analogy comes from the fact that $[\Sigma]_d \cap [\Sigma]_e =\{\Sigma\}$ for any component $\Sigma$ and $d \ne e$. 
Note that more than two non-trivial clusters can intersect in one component.
\end{example}

The rules of our cross-sequent calculus $\CSfive$ can be found in \cref{fig:S5n}. They are written in the standard nested sequent fashion with $\CS\vlfill{X}$ representing the result of inserting $X$ in place of the the \emph{hole} $\vlhole$ in $\CS\vlhole$.

\begin{figure}[t]
\centering{
\fbox{\parbox{0.98\textwidth}{
\[
\vlinf{\id}{}{\CS\vlfill{p,\notp}}{}
\qquad
\vlinf{\top}{}{\CS\vlfill{\top}}{}
\]
\[
\vlinf{\vee}{}{\CS\vlfill{\phi\vee\psi}}{\CS\vlfill{\phi\vee\psi,\phi,\psi}}{} 
\qquad
\vliinf{\wedge}{}{\CS\vlfill{\phi\wedge\psi}}{\CS\vlfill{\phi\wedge\psi,\phi}}{\CS\vlfill{\phi\wedge\psi,\psi}}{}
\]
\[
\vlinf{\Box \in a}{}{\CS\vlfill{[\Box_a \phi, \CT|H]_a}}{\CS\vlfill{[\phi|\Box_a \phi, \CT|H]_a}}{} 
\qquad
\vlinf{\Box \hasbracket a}{}{\CS\vlfill{\Box_a \phi, [H]_a}}{\CS\vlfill{\Box_a \phi, [\phi|H]_a}}{} 
\qquad
\vlinf{\Box\nobracket a}{\scriptscriptstyle \Box_a \phi \nobracket a}{\CS\vlfill{\Box_a \phi}}{\CS\vlfill{\Box_a \phi, [\phi]_a}}{} 
\]
\[
\vlinf{\Diamond \in a}{}{\CS\vlfill{[\Diamond_a \phi, \CT|\CU|H]_a}}{\CS\vlfill{[\Diamond_a \phi, \CT|\phi,\CU|H]_a}}{} 
\qquad
\vlinf{\Diamond \hasbracket a}{}{\CS\vlfill{\Diamond_a \phi, [\CT|H]_a}}{\CS\vlfill{\Diamond_a \phi, [\phi,\CT|H]_a}}{} 
\]
\[
\vlinf{\Diamond T a}{}{\CS\vlfill{\Diamond_a \phi}}{\CS\vlfill{\Diamond_a \phi,\phi}}{} 
\qquad
\vlinf{\Diamond \hasparent a}{}{\CS\vlfill{\CT,[\Diamond_a \phi,\CU|H]_a}}{\CS\vlfill{\phi,\CT,[\Diamond_a \phi,\CU|H]_a}}{} 
\]
}}}
\caption{Calculus $\CSfive$. All cross-sequents are proper. Only one of the rules $\Box \in a$, $\Box \hasbracket a$, and $\Box\nobracket a$ can be applied (in the bottom-up direction)  to any occurrence of $\Box_a \phi$, depending on its sequent component: $\Box \in a$ creates another $a$-sibling if the component already had $a$-siblings; $\Box \hasbracket a$ creates another $a$-child if the component already had $a$-children; $\Box\nobracket a$ creates the first $a$-child otherwise. The consolidation conditions make sure that the three situations are mutually exclusive. The rules $\Diamond \in a$, $\Diamond \hasbracket a$, $\Diamond T a$, and $\Diamond \hasparent a$, taken together, ensure that $\phi$ can be added to every sequent component of the $a$-cluster containing $\Diamond_a \phi$.}
\label{fig:S5n}
\end{figure}

\begin{definition}[Proof tree labeling]
\label{def:proof_label}
In the process of proof search, while constructing  proofs from derivations, i.e., from partial proofs, bottom up in~$\CSfive$, one consolidated labeling $\ell$ is used for all proof trees, imposing the \emph{coherency conditions} that (A)~applying cross-sequent rules upward, all components retain their labels, and (B)~newly created components are given fresh labels, not occurring in the whole proof tree up to this point.% 
\end{definition}

Intuitively, clusters of components in a cross-sequent  represent clusters of worlds in a model. Accordingly, the modal rules for~$\Box_a$~and~$\Diamond_a$ will treat all components from an $a$-cluster as being mutually accessible. For instance, if $\Box_a \phi \in \Gamma$ and $[\Gamma]_a$ is non-trivial, then  $\Box_a \phi$ would create an additional $a$-child of the $a$-parent of $[\Gamma]_a$, which would become an $a$-sibling of all other existing components in the cluster as in hypersequents. If, however, the cluster $[\Gamma]_a$ is  trivial, then the first $a$-child of $\Gamma$ is created, turning $\Gamma$ into an $a$-parent as in nested sequents. Note also that $\Diamond_a \phi\in\Gamma$ should have access to the $a$-parent of a non-trivial $[\Gamma]_a$, meaning that formulas can travel towards the root.

\begin{definition}[Cross-sequent semantics]
An \emph{interpretation} $\seqint$ of a labeled cross-sequent $\CS_\ell$  into a model $\M=(W,R,V)$ is a mapping  from $X \supseteq \La(\CS)$ to~$W$, i.e., from the labels of $\CS$ to the worlds of $\M$, such that  $\Delta \in [\Gamma]_a$ implies $\labint{\ell(\Gamma)}R_a \labint{\ell(\Delta)}$.  $\M, \seqint \vDash \CS$ if{f} $\M, \labint{\ell(\Gamma)} \vDash \phi$ for some component~$\Gamma$ of $\CS$ such that  $\phi\in \Gamma$. $\CS$ is \emph{valid}, written $\vDash_{\Sfive} \CS$ if{f} $\M, \seqint \vDash \CS$ for all $\M$ and $\seqint$.
\end{definition}

It is an easy exercise to show that the direct interpretation into models and formula interpretation yield the same notion of validity:
\begin{lemma}[Correctness]
$\M, w \nvDash \iota(\CS)$ if{f} there is an interpretation $\seqint$ of $\CS_\ell$ into $\M$ such that $\labint{0}=w$ and $\M,\seqint \nvDash \CS$, where $0=\ell(r(\CS))$ is the label of the root. Consequently, $\vDash_{\Sfive} \CS$ if{f}
$\iota(\CS) \in \Sfive$.
\end{lemma}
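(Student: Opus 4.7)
The plan is to prove the biconditional by structural induction on the cross-sequent~$\CS$, following the same recursive definition used to define $\iota$. The base case $\CS=\Delta$ (a sequent) is essentially immediate: $\iota(\Delta)=\bigvee\Delta$, the only interpretation sends the unique label~$0$ to~$w$, and both conditions reduce to saying that every formula of $\Delta$ is false at~$w$. The interesting case is the inductive step for $\CS, [H]_a$ with $H=\CS_1\mid\dots\mid\CS_m$, where $\iota(\CS,[H]_a)=\iota(\CS)\vee \bigvee_j \Box_a \iota(\CS_j)$.

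For the forward direction of this step, assume $\M,w \nvDash \iota(\CS)\vee \bigvee_j \Box_a \iota(\CS_j)$. Then $\M,w \nvDash \iota(\CS)$ and, for each~$j$, there is $v_j$ with $wR_av_j$ and $\M,v_j\nvDash \iota(\CS_j)$. The induction hypothesis supplies an interpretation $\seqint_0$ of $\CS$ with $\labint{0}_0=w$ falsifying $\CS$, and interpretations $\seqint_j$ of $\CS_j$ sending the root label of $\CS_j$ to $v_j$ and falsifying $\CS_j$. Because the coherency conditions on labelings guarantee that the label sets of $\CS$ and of the $\CS_j$ are pairwise disjoint, the union $\seqint$ of these maps is well defined. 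To see that $\seqint$ is a genuine interpretation of the combined cross-sequent, I would check the cluster condition agent by agent: clusters lying entirely inside $\CS$ or inside some $\CS_j$ are handled by the component interpretations, while the only \emph{new} non-trivial cluster is the $a$-cluster comprising the root of $\CS$ together with the roots of the $\CS_j$'s. Here one uses that $R_a$ is an equivalence relation to derive all intra-cluster connections from the single family of edges $wR_av_j$ supplied by the assumption; consolidation condition~(a) guarantees that the root of $\CS$ had no prior $a$-children with which these new roots would need to be merged. The reverse direction unwinds the same construction: given a falsifying $\seqint$ for $\CS,[H]_a$ with $\labint{0}=w$, restrict it to the labels of $\CS$ and to the labels of each $\CS_j$; the cluster condition applied to the root-to-root $a$-edges yields $wR_a\seqint(\text{root of }\CS_j)$, and the induction hypothesis then delivers $\M,w\nvDash\iota(\CS)$ and $\M,\seqint(\text{root of }\CS_j)\nvDash \iota(\CS_j)$, hence $\M,w\nvDash \Box_a\iota(\CS_j)$.

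The consequence $\vDash_{\Sfive}\CS$ iff $\iota(\CS)\in\Sfive_n$ is then a direct quantifier manipulation: universally quantifying both conditions over all pointed models $(\M,w)$ and using the first part turns the non-existence of a falsifying $(\M,w)$ for $\iota(\CS)$ into the non-existence of a falsifying pair $(\M,\seqint)$ for $\CS$. The main point of care is bookkeeping the cluster condition in the inductive step; the reliance on $R_a$ being reflexive, symmetric, and transitive is what makes the combined interpretation legal when the new $a$-cluster has more than two members, and this is the only place where the $\Sfive_n$ frame conditions are genuinely used in the argument.
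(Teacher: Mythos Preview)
Your structural induction on the build-up of cross-sequents is correct and is exactly the kind of argument the authors have in mind; the paper itself omits the proof entirely, stating only that ``it is an easy exercise.'' One small terminological slip: the disjointness of the label sets of $\CS$ and the $\CS_j$'s follows simply from the requirement that labels within a single cross-sequent are unique, not from the \emph{coherency conditions} of \cref{def:proof_label}, which concern labelings across a proof tree rather than within one cross-sequent. Also, when you invoke the induction hypothesis on a sub-cross-sequent $\CS_j$, strictly speaking you need to relabel so that its root carries label~$0$; this is harmless since the statement is clearly invariant under relabeling, but it is worth saying once.
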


\begin{lemma}[Soundness]
\label{lem:soundess}
All rules of $\CSfive$ are sound, i.e., preserve the cross-sequent validity.
$
 \vdash_{\CSfive} \CS 
\Rightarrow
 \vDash_{\Sfive_n} \CS
$.
Consequently, also $
 \vdash_{\CSfive} \CS 
\Rightarrow
 \iota(\CS) \in \Sfive_n 
$.
\end{lemma}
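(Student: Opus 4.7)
The plan is to prove soundness by induction on the height of the $\CSfive$-derivation, showing that each rule preserves cross-sequent validity. At each inductive step I argue contrapositively: assuming the conclusion is falsified by some $(\M,\seqint)$, I exhibit a falsifying interpretation for at least one premise. The second claim, $\iota(\CS)\in\Sfive_n$, then follows by the Correctness lemma already stated above.

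The axioms $\id$ and $\top$ are trivially valid by bivalence of~$V$ and the clause for~$\top$. For the propositional rules $\vee$ and $\wedge$, the principal formula is preserved into the premises and no labels are added, so the same $\seqint$ falsifies an appropriate premise (both premises in the case of $\wedge$). For the diamond rules, no new labels are introduced either: consider $\Diamond \hasparent a$ as a representative case, where the conclusion has $\Diamond_a\phi$ in an $a$-child with label $g$ whose parent has label $p$. The cluster condition gives $\labint{p}R_a\labint{g}$, falsification yields $\M,v\nvDash\phi$ for every $v\in R_a(\labint{g})$, and symmetry of $R_a$ then gives $\M,\labint{p}\nvDash\phi$, so adding $\phi$ to the parent still leaves the premise falsified by the same~$\seqint$. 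The rules $\Diamond\in a$, $\Diamond\hasbracket a$, and $\Diamond T a$ are analogous, invoking symmetry, transitivity, and reflexivity of $R_a$ to propagate $\phi$ between members of the $a$-cluster containing $\Diamond_a\phi$.

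The box rules are the substantive cases because they require \emph{extending} $\seqint$ to a fresh label $\ell'$ attached to the newly created cluster member (cf.\ Definition~\ref{def:proof_label}). For each of $\Box\in a$, $\Box\hasbracket a$, and $\Box\nobracket a$, falsification of the conclusion yields $\M,\labint{g}\nvDash\Box_a\phi$ for the label~$g$ carrying $\Box_a\phi$, whence some $v\in R_a(\labint{g})$ witnesses $\M,v\nvDash\phi$; we set $\labint{\ell'}\ce v$. The key verification is that this extension still satisfies every cluster condition involving $\ell'$. The consolidation conditions make the three rules apply to mutually exclusive configurations, so $\ell'$ joins exactly one $a$-cluster; for any other member $\Delta$ of that cluster, the original $\seqint$ already provides $\labint{g}R_a\labint{\ell(\Delta)}$ (possibly via symmetry and transitivity from the cluster conditions of the conclusion), and combining with $\labint{g}R_a v$ via transitivity and symmetry yields both $\labint{\ell'}R_a\labint{\ell(\Delta)}$ and its converse. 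Since the new component carries only $\phi$, which is false at $v$, and all other formulas remain false as before, the premise is falsified by the extended interpretation.

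The main obstacle is precisely this cluster bookkeeping for the box rules: one must verify, crucially using all three equivalence-relation properties of each $R_a$, that after extension by $\ell'$ every cluster condition in which $\ell'$ participates remains valid, possibly across several non-trivial $a$-clusters meeting at neighbouring components. The consolidation conditions on cross-sequents are exactly what make this bookkeeping coherent in each of the three mutually exclusive box-rule shapes.
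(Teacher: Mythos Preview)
The paper states this lemma without proof, so there is no reference argument to compare against; your contrapositive induction on derivation height is the standard route and is essentially correct.

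One small slip: for the $\wedge$ rule you write that the same $\seqint$ falsifies ``both premises.'' That is not true in general. If $\M,\labint{l}\nvDash\phi\wedge\psi$, then only one of $\phi,\psi$ need be false at $\labint{l}$; hence only one of $\CS\vlfill{\phi\wedge\psi,\phi}$ and $\CS\vlfill{\phi\wedge\psi,\psi}$ is guaranteed to be falsified by $\seqint$. This is still enough for soundness (you only need one falsifiable premise), so the argument goes through once you replace ``both'' by ``at least one.''

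A second minor point: your appeal to symmetry of $R_a$ in the $\Diamond\hasparent a$ case is harmless but unnecessary. The interpretation condition already quantifies over all pairs $\Delta\in[\Gamma]_a$, and since $a$-component clusters are symmetric (parent is in the child's cluster and vice versa), both directions $\labint{p}R_a\labint{g}$ and $\labint{g}R_a\labint{p}$ are given directly. Where the equivalence-relation properties of $R_a$ genuinely do work is exactly where you use them in the box cases: extending $\seqint$ by $\labint{\ell'}\ce v$ and then closing the new cluster constraints via reflexivity, symmetry, and transitivity.
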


Calculus $\CSfive$ is not terminating in the naive sense: rule $\Box \hasbracket a$ can be repeatedly used to create more and more components within the same $a$-bracket. However, there is a standard way of restricting rule applications to avoid duplications, and that produces a strongly terminating calculus.

\begin{definition}[Saturation]
\label{def:saturation}
    A formula $\phi$ in a component~$\Gamma$ of a cross-sequent $\CS$ is \emph{saturated} if{f} the following type-dependent condition is satisfied:
    \begin{compactitem}
        \item $\top$ is never saturated, $\bot$ is always saturated;
        \item a literal is saturated if{f} the opposite literal is not present in $\Gamma$, i.e., $p\in \Gamma$ is saturated if{f} $\notp\notin\Gamma$, and $\notp\in \Gamma$ is saturated if{f} $p\notin \Gamma$;
        \item $\phi \vee \psi\in \Gamma$ is saturated if{f}  $\{\phi,\psi\} \subseteq \Gamma$;
        \item $\phi \wedge \psi\in \Gamma$ is saturated if{f} $\{\phi,\psi\}\cap\Gamma\ne\varnothing$;
        \item $\Box_a \phi\in \Gamma$ is saturated if{f}  $\phi\in \Delta$  for some  component $\Delta \in [\Gamma]_a$;
        \item $\Diamond_a \phi\in \Gamma$ is saturated if{f}  $\phi\in \Delta$  for each  component $\Delta \in [\Gamma]_a$, i.e., $\phi \in \Gamma$, and $\phi$ is present in each $a$-parent, $a$-sibling, and $a$-child of $\Gamma$.
     \end{compactitem}
     A cross-sequent $\CS$ is \emph{saturated} if all formulas in it are saturated.
\end{definition}

\begin{lemma}[Termination]
\label{lem:termination}
Let us restrict calculus $\CSfive$ so that the rules are applied in the following order:
\begin{compactenum}
\item If there is an (unsaturated) occurrence of $\top$, then apply rule $\top$ (to this occurrence).
\item If a literal is not saturated, i.e., its opposite is present in the same component, then apply rule $\id$ to this literal and its opposite.
\item If $\phi \vee \psi$ is not saturated, then apply rule $\vee$ to it.
\item If $\phi \wedge \psi$ is not saturated, then apply rule $\wedge$ to it.
\item If $\Diamond_a \phi\in\Gamma$ is not saturated for a sequent component $\Delta$, i.e., $\phi \notin \Delta$ for some $\Delta \in [\Gamma]_a$, then apply  to $\Diamond_a \phi$  that of the rules $\Diamond T a $, $\Diamond \hasparent a$, $\Diamond \in a$, or $\Diamond \hasbracket a$ which adds $\phi$ to $\Delta$.
\item If $\Box_a \phi$ is not saturated, then apply one of the rules $\Box \in a$, $\Box \hasbracket a$, or $\Box \nobracket a$ to it, whichever is applicable.
\end{compactenum}
The resulting calculus $\TCSfive$ is strongly terminating.
\end{lemma}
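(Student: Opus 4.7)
The plan is to exhibit a well-founded measure $\mu$ on cross-sequents such that each non-axiom rule of $\TCSfive$ applied bottom-up strictly decreases $\mu$; termination then follows at once. Because the priority-based restriction requires that a rule be applied only to a formula that is \emph{unsaturated}, each such application introduces either a formula previously missing from its target component or a brand-new component. Hence it suffices to bound, uniformly along every derivation starting from a fixed cross-sequent $\CS_0$, both the per-component size and the total number of components.

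By a routine induction on rule applications, every formula appearing anywhere in a derivation is a subformula of some formula of $\CS_0$. Writing $S$ for this finite set of subformulas and $N := |S|$, every component is a subset of $S$ and thus has size at most $N$. I next observe that \emph{saturation is monotone}: once an occurrence $\phi \in \Gamma$ is saturated, it remains so, because components only ever gain formulas and $a$-clusters only ever gain members. Consequently, each pair $(a,\psi)$ with $\Box_a\psi \in S$ triggers the creation of at most one new $a$-child inside any given $a$-cluster $[\Sigma]_a$: as soon as the triggered child has been inserted (containing $\psi$), every occurrence of $\Box_a\psi$ anywhere in $[\Sigma]_a$ is permanently saturated. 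Hence every $a$-cluster contains at most $1 + N$ members, and the out-degree of each component is at most $n(1+N)$.

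To bound the depth of the underlying tree, I check the invariant that any formula $\phi$ occurring in a component at tree-depth $k$ satisfies $k + \fd(\phi) \le \fd(\iota(\CS_0))$: it holds in $\CS_0$ by unfolding the formula interpretation, and every rule of $\TCSfive$ preserves it, since $\Box$-rules advance $k$ by one while cutting $\fd$ by one, and $\Diamond$-rules either keep $k$ fixed with $\fd$ strictly dropping, or lower $k$ by one and $\fd$ by one. Because creating a component at depth $k+1$ requires a principal $\Box_a$-formula at depth $k$ whose $\fd$ is at least one, the tree-depth never exceeds $\fd(\iota(\CS_0))$. Combining the depth and out-degree bounds, the total number of components in any derivable cross-sequent is bounded by a constant $B$ depending only on $\CS_0$.

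Finally, define $\mu(\CS) := (B - C(\CS))\cdot(BN+1) + (BN - F(\CS))$, where $C(\CS)$ is the number of components and $F(\CS)$ the total number of formula occurrences in $\CS$; this is a non-negative integer on every derivable cross-sequent. Every $\Box$-rule strictly increases $C$, while every $\vee$-, $\wedge$-, or $\Diamond$-rule leaves $C$ unchanged but strictly increases $F$ by virtue of the unsaturated-formula guard. Hence $\mu$ strictly decreases with every rule application in $\TCSfive$, whence strong termination follows. The crux of the argument, and the step requiring the most care, is the cluster-size bound: one must verify that $\Diamond$-rules, which may insert complex formulas (including new $\Box_a$-formulas) into cluster members, cannot re-open a cluster for unbounded $\Box$-triggering; this rests precisely on the monotonicity of saturation established above.
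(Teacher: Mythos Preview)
Your proof is correct and follows essentially the same approach as the paper: both rely on saturation monotonicity, the subformula bound on component size, the per-cluster bound on the number of $\Box$-triggered children, and a depth bound, with your version packaging these into an explicit decreasing measure where the paper counts rule applications directly. One small slip: your case analysis for $\Diamond$-rules preserving the depth invariant omits $\Diamond\hasbracket a$, which \emph{raises} $k$ by one; the invariant $k+\fd(\phi)\le\fd(\iota(\CS_0))$ is nonetheless preserved there since $\fd$ drops by one, so the argument goes through unchanged.
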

\begin{proof}
Since no rule removes formulas, it is clear that, once saturated, each formula remains saturated going up the proof tree. Each component is a set of formulas, hence, is bounded by the number $M$ of subformulas of the endsequent. To each occurrence of a formula other than $\Diamond_a \phi$, at most one rule can be applied on any branch of the proof tree. For formulas $\Diamond_a \phi$, the number of rule applications along a branch is bounded by the cluster size. Each component in an $a$-cluster is created by a $\Box_a \phi$ formula from the same cluster, with each $\Box_a \phi$ creating at most one component in the cluster. Hence, the size of the cluster and the number of applications of a rule to each occurrence of $\Diamond_a \phi$ is also bounded by $M$. The number of clusters of a given parent is bounded by the number $n$ of agents. Finally, the length $m$ of any chain of components $\Gamma_0, \Gamma_1,\dots,\Gamma_m$ such that for each $i=1,\dots,m$, component $\Gamma_{i}$ is an $a_i$-child of $\Gamma_{i-1}$ for some agent~$a_i$ is bounded by $d(\CS)+\fd(\CS)$. \qed
\end{proof}

Consequently, every branch of a $\TCSfive$ derivation is bounded in depth and its leaf is either derivable (by $\id$ or $\top$) or saturated.

\begin{lemma}[Countermodel construction]
\label{lem:countermodel}
If a cross-sequent $\CS$ is saturated, it is not valid.
\end{lemma}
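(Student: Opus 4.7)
The plan is to build a canonical countermodel directly from the saturated cross-sequent $\CS$ itself, using the components as worlds and the $a$-cluster partition as the equivalence relation~$R_a$. Concretely, I would let $W \ce \La(\CS)$, define $\ell(\Gamma) R_a \ell(\Delta)$ if{f} $\Delta \in [\Gamma]_a$, and set $V(p) \ce \{\ell(\Delta) : \notp \in \Delta\}$, then take the interpretation~$\seqint$ to be the identity on $\La(\CS)$. Since the $a$-clusters partition the components of~$\CS$, each $R_a$ is automatically reflexive, symmetric, and transitive, so $\M=(W,R,V)$ is a genuine epistemic Kripke model; and the cluster constraint $\Delta \in [\Gamma]_a \Rightarrow \labint{\ell(\Gamma)} R_a \labint{\ell(\Delta)}$ on~$\seqint$ holds by construction. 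The valuation~$V$ is well defined precisely because saturation of literals precludes $p$ and $\notp$ from co-occurring in the same component.

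The core step would be a structural induction on~$\phi$ showing that, whenever $\phi$ occurs in a component $\Delta$ of the saturated~$\CS$, we have $\M,\ell(\Delta) \nvDash \phi$. The propositional cases are routine: $\top \notin \Delta$ because $\top$ is never saturated; literals are handled by the well-definedness of~$V$ just noted; and for $\phi \vee \psi$ and $\phi \wedge \psi$ we invoke the induction hypothesis together with the saturation clauses that place both disjuncts, respectively at least one conjunct, into~$\Delta$. For $\Box_a \phi \in \Delta$, saturation yields some $\Gamma \in [\Delta]_a$ with $\phi \in \Gamma$, so the induction hypothesis supplies an $R_a$-successor of $\ell(\Delta)$ falsifying~$\phi$. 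The case $\Diamond_a \phi \in \Delta$ is the one most sensitive to the shape of~$R_a$: saturation gives $\phi \in \Gamma$ for every $\Gamma \in [\Delta]_a$, and, crucially, the set $R_a(\ell(\Delta))$ coincides with $\{\ell(\Gamma) : \Gamma \in [\Delta]_a\}$, so every $R_a$-successor of $\ell(\Delta)$ falsifies~$\phi$ by the induction hypothesis.

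The induction then yields $\M,\seqint \nvDash \CS$, and hence $\nvDash_{\Sfive_n} \CS$. The subtle point, and the only place where the architecture of cross-sequents really bites, is the $\Diamond_a$-equality $R_a(\ell(\Delta)) = \{\ell(\Gamma) : \Gamma \in [\Delta]_a\}$: the right-to-left inclusion holds by definition of~$R_a$, but the left-to-right inclusion relies on the $a$-clusters being a genuine partition of the components, which in turn depends on the consolidation conditions~(a) and~(b) ensuring that no component sits in two distinct $a$-clusters. Everything else is a standard canonical-model-style verification matching the clauses of saturation against the clauses of truth in Kripke semantics.
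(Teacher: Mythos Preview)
Your proposal is correct and follows essentially the same canonical-countermodel approach as the paper: worlds are labels, $R_a$ is read off the $a$-cluster partition, the interpretation is the identity, and an induction on formula structure shows each occurring formula is falsified at its component. The only cosmetic difference is your valuation $V(p)=\{\ell(\Delta):\notp\in\Delta\}$ versus the paper's $V(p)=\{\ell(\Gamma):p\notin\Gamma\}$; both work under saturation, and your more explicit treatment of the $\Diamond_a$ case (pinning $R_a(\ell(\Delta))=\{\ell(\Gamma):\Gamma\in[\Delta]_a\}$ on the consolidation conditions) fleshes out what the paper leaves as ``dual.''
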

\begin{proof}
    It is sufficient to construct a countermodel $\M\ce(\La(\CS),R,V)$, where  $(l_1,l_2)\in R_a$ if{f} $l_1=\ell(\Gamma)$ and $l_2 = \ell(\Delta)$ for some components $\Gamma$ and $\Delta$ such that $\Delta \in [\Gamma]_a$ and where  $V(p) \ce \{l\in \La(\CS) \mid l = \ell(\Gamma), p \notin \Gamma\}$. The interpretation is the identity function: $\labint{l}\ce l$ for all $l \in \La(\CS)$. It is a standard exercise to prove by induction on the formula complexity that for each $l=\ell(\Gamma)$  and each formula $\phi \in \Gamma$, we have $\M, l \nvDash \phi$. For instance, $\top$ is not present in any component because $\CS$ is saturated. If $\notp \in \Gamma$, then $p \notin \Gamma$ due to saturation, hence, $\M,l\vDash p$ and $\M,l\nvDash\notp$. Boolean connectives are standard. If $\Box_a \phi \in \Gamma$, then $\phi \in \Delta\in[\Gamma_a]$, hence $\M,\ell(\Delta) \nvDash \phi$, meaning $\M,l\nvDash\Box_a \phi$, because $l=\ell(\Gamma)R_a\ell(\Delta)$. The case of~$\Diamond_a \phi$ is dual.  Since in the constructed model each formula is false in the world/component it belongs to, the whole cross-sequent is false too. 
    \qed
\end{proof}

\begin{theorem}[Completeness]
\label{th:completeness}
    Calculi $\TCSfive$ and $\CSfive$ are sound and complete. 
\end{theorem}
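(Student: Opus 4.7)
Soundness is already in hand: \cref{lem:soundess} gives soundness of $\CSfive$, and since $\TCSfive$ is obtained by restricting when rules may fire, its soundness is immediate. The plan for completeness is a standard proof-search contrapositive: take an arbitrary cross-sequent $\CS$, run the $\TCSfive$ strategy bottom up, and show that if the resulting derivation is not a proof then $\CS$ is falsifiable.

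By \cref{lem:termination} the derivation tree produced by the $\TCSfive$ strategy is finite, and, as noted just after that lemma, each leaf is either closed by $\id$ or $\top$ or else saturated. If every leaf is closed, we are done: the tree is a $\TCSfive$-proof of $\CS$, and hence also a $\CSfive$-proof. Otherwise we pick a saturated open leaf $\CS'$; by \cref{lem:countermodel} there are a model $\M$ and an interpretation $\seqint$ of $\CS'_\ell$ with $\M,\seqint\nvDash \CS'$. It remains to transport this countermodel back down the tree to the root.

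The key ingredient is \emph{semantic invertibility} of every rule of $\CSfive$: whenever a premise is falsifiable, so is the conclusion. Because no rule ever deletes a formula from any component, and the surrounding context $\CS\vlhole$ is preserved verbatim, this is almost trivial. For the Boolean rules $\vee$ and $\wedge$ and for the diamond rules, the conclusion is obtained from the premise by removing some unfolded formulas, so any interpretation falsifying the premise falsifies the conclusion with the same labels; the falsifying witness required by the semantics of cross-sequents is still present. For the box rules $\Box \in a$, $\Box \hasbracket a$, and $\Box \nobracket a$, which additionally spawn a new component in an $a$-bracket, one restricts the premise interpretation to the labels of the conclusion; this restriction still satisfies the cluster condition $\Delta\in[\Gamma]_a \Rightarrow \labint{\ell(\Gamma)}R_a \labint{\ell(\Delta)}$ because removing a label only drops constraints. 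Iterating invertibility from $\CS'$ down to $\CS$ produces a countermodel of $\CS$, which by the Correctness lemma witnesses $\iota(\CS)\notin \Sfive_n$. Contrapositively, a valid cross-sequent $\CS$ admits a $\TCSfive$-proof, and thus also a $\CSfive$-proof.

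The main obstacle is the case analysis behind semantic invertibility of the box rules: one must verify that the interpretation obtained by dropping (or, dually, freshly assigning) the label of the newly created component remains a legal interpretation of a proper cross-sequent, i.e.,\ that all consolidation conditions and cluster-accessibility constraints continue to hold. This is routine but requires care with label bookkeeping; everything else in the argument is a direct assembly of \cref{lem:soundess,lem:termination,lem:countermodel}.
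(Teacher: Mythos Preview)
Your proposal is correct and follows the same overall strategy as the paper: soundness from \cref{lem:soundess}, termination from \cref{lem:termination}, and a countermodel from a saturated leaf via \cref{lem:countermodel}. The one difference is in how the countermodel is transported to the root. You argue rule-by-rule via semantic invertibility, flagging the box rules as the ``main obstacle'' requiring care with label restriction. The paper short-circuits this entirely: since no rule deletes formulas and labels are coherent, every formula in every component of the endsequent~$\CS$ sits in the same-labelled component of the saturated leaf~$\CS_s$, and since interpretations are allowed to be defined on a superset $X\supseteq\La(\CS)$ of labels, the \emph{very same} $\M,\seqint$ that falsifies~$\CS_s$ already falsifies~$\CS$ with no restriction or induction needed. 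Your inductive argument is not wrong, but the observation you already make---that rules only add formulas and components---is by itself enough to jump from leaf to root in one step, eliminating what you identify as the main obstacle.
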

\begin{proof}
    $\TCSfive$ is a subsystem of $\CSfive$, hence it is sufficient to show the soundness of the latter and the completeness of the former.
    Soundness of $\CSfive$ was proved in \cref{lem:soundess}. To prove completeness of $\TCSfive$, assume $\TCSfive\nvDash \CS$ and non-deterministically apply rules of $\TCSfive$ until no rule is applicable. This process will terminates by \cref{lem:termination}. Since $\CS$ is not derivable,  at least one leaf $\CS_s$ is saturated. By \cref{lem:countermodel} there exists a countermodel $\M$ and interpretation $\seqint$ of $\CS_s$ into $\M$ such that $\M, \seqint \nvDash \CS_s$. Since all the formulas in the endsequent are present in the leaf, the same $\M,\seqint \nvDash \CS$, hence, $\nvDash_{\Sfive_n} \CS$. \qed
\end{proof}

\begin{corollary}[Decidability and FMP]
Logic $\Sfive_n$ is decidable and has the finite model property (FMP).
\end{corollary}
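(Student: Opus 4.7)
The plan is to read off both results directly from the calculus $\TCSfive$ and the lemmas already proved, with essentially no new machinery required.

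For decidability, I would proceed as follows. Given a formula $\phi$, form the singleton cross-sequent $\CS \ce \phi$ (with $\fd(\CS) = \fd(\phi)$ and $d(\CS)=0$). By \cref{lem:termination}, any proof-search strategy in $\TCSfive$ that respects the saturation-driven order terminates in finitely many steps, producing a finite derivation tree whose leaves are either axiomatic or saturated. By \cref{th:completeness}, $\iota(\CS) = \phi$ is $\Sfive_n$-valid if and only if all leaves are axiomatic (since an unprovable sequent yields a saturated leaf, hence a countermodel via \cref{lem:countermodel}, and conversely a derivation yields validity by soundness). Since saturation, applicability of rules, and axiomhood are all decidable conditions on a finite cross-sequent, this gives an effective decision procedure for $\Sfive_n$.

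For the finite model property, suppose $\phi \notin \Sfive_n$. Then $\TCSfive \nvdash \phi$, so by termination there exists a saturated leaf $\CS_s$ on the failed proof-search branch starting from the endsequent $\phi$. The countermodel $\M = (\La(\CS_s), R, V)$ constructed in the proof of \cref{lem:countermodel} falsifies $\CS_s$ and hence (by monotonicity of formula occurrences along the branch, already used in the proof of \cref{th:completeness}) also falsifies $\phi$. Crucially, $\La(\CS_s)$ is finite: termination ensures that the saturated leaf is itself a finite object, so its label set is finite. Thus every non-valid formula is refuted in a finite epistemic model, which is precisely the FMP.

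The only point that requires any care is making sure that the constructed $\M$ is a genuine epistemic model, i.e., that each $R_a$ is an equivalence relation. This follows from the definition of $R_a$ in \cref{lem:countermodel} via the cluster relation $\Delta \in [\Gamma]_a$: components in the same $a$-cluster are mutually related (giving reflexivity, symmetry, and transitivity within the cluster), and distinct $a$-clusters partition the components, so $R_a$ is the union of the equivalence relations on these finite clusters. No obstacle is expected here; the bulk of the work has already been absorbed into \cref{lem:termination,lem:countermodel,th:completeness}, and the corollary is essentially a packaging statement.
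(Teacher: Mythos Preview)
Your proposal is correct and follows essentially the same approach as the paper: decidability from strong termination of $\TCSfive$, and the FMP from the finiteness of the saturated leaf supplying the countermodel in \cref{lem:countermodel}. Your additional paragraph verifying that each $R_a$ is an equivalence relation is not spelled out in the paper's (two-sentence) proof, but it is a harmless and correct sanity check that the paper leaves implicit in the countermodel construction.
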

\begin{proof}
Since the proof search is strongly terminating by \cref{lem:termination}, calculus~$\TCSfive$ provides a decision algorithm. Since the countermodel is extracted from a leaf of the failed proof search, the number of worlds in it is the number of sequent components in this leaf, i.e., is finite. \qed
\end{proof}

\begin{corollary}
    Weakening and contraction, as well as cut are admissible in~$\CSfive$. Moreover, all the rules are invertible.
\end{corollary}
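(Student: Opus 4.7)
The plan is to deduce all four properties as consequences of the soundness lemma together with the completeness theorem, exploiting the fact that every rule of $\CSfive$ only adds information going from conclusion to premise.

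First, for invertibility, I would observe that the conclusion of every rule is structurally contained in each of its premises: the propositional and diamond rules merely add formulas to existing components, while $\Box \in a$, $\Box \hasbracket a$, and $\Box \nobracket a$ additionally introduce fresh sibling or child components. From the cross-sequent semantics, any interpretation $\seqint$ falsifying a premise $\CS'$ can be restricted to an interpretation of the conclusion $\CS$ by forgetting the labels of the newly created components. Since the equivalence-relation structure required of $\seqint$ is inherited on any label subset, and the remaining components contain only formulas present in $\CS'$, the restriction still falsifies $\CS$. Thus $\vDash_{\Sfive_n} \CS$ implies $\vDash_{\Sfive_n} \CS'$, and by completeness $\vdash_{\CSfive} \CS$ implies $\vdash_{\CSfive} \CS'$.

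Second, weakening—passing from $\vdash_{\CSfive} \CS$ to $\vdash_{\CSfive} \CS^+$, where $\CS^+$ is obtained by adding formulas to components and/or inserting new sibling or child components—follows by exactly the same semantic restriction argument: a falsifying interpretation of $\CS^+$ restricts to a falsifying interpretation of $\CS$. Completeness then yields admissibility. Contraction at the formula level is built into the syntax, since components are sets. The nontrivial instance is hypersequent-level contraction, which collapses two identical cross-sequent components appearing as $a$-siblings into one. Here a falsifying interpretation $\seqint$ of the contracted cross-sequent lifts to a falsifying interpretation of the uncontracted one by assigning the duplicated label to the same world as the original, using reflexivity of $R_a$ to preserve the interpretation conditions; validity transfers, and completeness again closes the argument.

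Third, for cut—formulated as: from $\vdash_{\CSfive} \CS\vlfill{\Gamma, \phi}$ and $\vdash_{\CSfive} \CS\vlfill{\Gamma, \neg\phi}$ infer $\vdash_{\CSfive} \CS\vlfill{\Gamma}$—soundness gives $\vDash_{\Sfive_n}$ for both premises. Fix any model $\M$ and interpretation $\seqint$ of $\CS\vlfill{\Gamma}$; extend it unchanged to $\CS\vlfill{\Gamma, \phi}$ and $\CS\vlfill{\Gamma, \neg\phi}$. Since both extended cross-sequents are satisfied, either one of the formulas in $\Gamma$ is true at the relevant world (and we are done) or else both $\phi$ and $\neg\phi$ are, which is impossible. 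Hence $\CS\vlfill{\Gamma}$ is valid, and completeness supplies a $\CSfive$-proof.

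The main delicate point is checking, for invertibility of the $\Box$-rules that introduce new components, that the ``label restriction'' really preserves the interpretation's coherence with the $R_a$-cluster structure encoded by the consolidation conditions; but this is straightforward once one notes that deleting a label never breaks an existing $R_a$-link between the labels that survive. Everything else is routine semantic bookkeeping.
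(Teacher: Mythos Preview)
Your approach is correct and essentially the same as the paper's: both derive all four properties from soundness plus completeness by showing the relevant semantic preservation. The paper is slightly more compressed---it observes that every inverted rule is an instance of weakening (by a formula, a component within a bracket, or a permitted new bracket), so invertibility follows from weakening admissibility rather than by a separate restriction-of-interpretation argument, and it notes that whole-bracket contraction is vacuous because the consolidation conditions forbid two $a$-brackets at the same component---but these are organizational differences, not substantive ones.
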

\begin{proof}
For cut and formula-level weakening and contraction, this directly follows from \cref{th:completeness}. The same is true for weakening/contracting components within an existing bracket. Contracting a whole bracket cannot happen since having two $a$-brackets would violate  consolidation condition~\eqref{cond:noasiblings}. Weakening by a whole bracket is admissible by \cref{th:completeness} as long as the new bracket does not violate the consolidation conditions. Since all inverted rules are weakening by formula(s), or by a component within a bracket (e.g., $\Box \in a$), or by a permitted bracket (e.g., $\Box \nobracket a$), they are also admissible.\qed
\end{proof}
\section{Lyndon Interpolation Property for $\Sfive_n$}
\label{sect:craig}

We now formulate and prove Lyndon interpolation for $\CSfive$. Since cross-sequents can easily be embedded into labelled sequents~\cite{Kuznets18}, the general method of computing interpolants for the latter can be applied to cross-sequents as is. Accordingly, we omit most of the standard proofs regarding the multicomponent extention of Maehara's interpolation method. Instead, we will focus on the additional task of proving interpolation for the agents/modalities. 
\begin{definition}[Interpolation Properties]
    A logic $\Lgk$ has an \emph{interpolation property} if{f} the following holds. If $\phi\to \psi \in \Lgk$, then there exists an \emph{interpolant} $\delta$ such that $\phi\to \delta \in \Lgk$ and $\delta\to \psi \in \Lgk$ and, additionally,  $\delta$ is constructed using only the syntactic elements common to $\phi$ and $\psi$. Different interpolation properties are obtained by varying the \emph{common syntactic elements}. The requirements are that: for the \emph{Craig interpolation property (\prp{CIP})}  $\Var(\delta)\subseteq\Var(\phi)\cap\Var(\psi)$~\cite{Craig57b}; for the \emph{Lyndon interpolation property (\prp{LIP})}  $\Varm(\delta)\subseteq\Varm(\phi)\cap\Varm(\psi)$ for each $\ast \in \{+,-\}$~\cite{Lyndon59}. We also consider the less well studied \emph{Agent Lyndon Interpolation Property \prp{ALIP}} that, in addition to the common language used in~\prp{LIP} requires that, $\Ag(\delta)\subseteq\Ag(\phi)\cap\Ag(\psi)$. Deductive versions of these properties are obtained by replacing $\to$ by $\vdash$, where $\vdash$ is some consequence relation.
\end{definition}
It directly follows that \prp{ALIP} implies \prp{LIP}, which  implies \prp{CIP}. We will show that \prp{ALIP}, the strongest of the three properties, holds for $\Sfive_n$.

In Maehara method~\cite{Maehara61,TroelstraS00} and its derivatives, interpolants are constructed by induction on the depth of a given derivation, where each sequent is \emph{split} into the left and right sides, ultimately corresponding to $\phi$ an $\psi$ of $\phi \to \psi$ respectively. In a one-sided sequent calculus, the only sequent is split by semicolon ;. For cross-sequents, every component  must be split this way, with the label of the component given as a superscript to ; when needed. Vice versa, given a split cross-sequent $\CS$, its left (right) side $L(\CS)$ ($R(\CS)$) is obtained by removing from each component the semicolon along with all formulas to the right (left) side of the semicolon.
Each rule $\rsf$ in \cref{fig:S5n} produces several split versions depending on whether the principal formula(s) is (are) on the left or right of the split. E.g., each rule $\rsf$ with one principal formula produces two versions $L\rsf$ and $R\rsf$ based on the following rule: the active formula(s) must be on the same side of the split as the principal formula, while all other formulas, including the principal one, are on the same side in the premise(s) as in the conclusion.  
\begin{definition}[Multiformula]
\label{def:multiformula}
Let $L$ be a fixed countable set of  labels (used to label components in cross-sequents). A multiformula is defined by grammar $\mho \cce \labe{l}{\phi} \mid (\mho \spconj \mho) \mid (\mho \spdisj \mho)$. $\La(\mho)$ denotes all labels used in $\mho$. Given an interpretation $\seqint$ from $X\supseteq \La(\mho)$ to worlds of a model $\M$, we define $\M, \seqint \vDash \labe{l}{\phi}$ if{f} $\M, \labint{l} \vDash \phi$;  $\M, \seqint \vDash \mho_1 \spconj \mho_2$ if{f} $\M, \seqint 
\vDash \mho_i$ for both $i=1,2$; $\M, \seqint \vDash \mho_1 \spdisj \mho_2$ if{f} $\M, \seqint 
\vDash \mho_i$ for at least one $i=1,2$. $\Var(\mho)$ and $\Ag(\mho)$, as well as their polarized versions, are defined through the label-forgetting projection $\bullet$ defined by $(\labe{l}{\phi})^\bullet \ce \phi$, $(\mho_1 \circledast \mho_2)^\bullet \ce \mho_1^\bullet \ast \mho_2^\bullet$ for $\ast \in \{\wedge, \vee\}$. E.g., $\Ag(\mho)\ce \Ag(\mho^\bullet)$.
\end{definition}

\begin{definition}[Sequent interpolation properties]
    \label{def:seqint}
    A multiformula $\mho$ is called a \emph{\prp{SLIP}  interpolant} of a split cross-sequent $\CS$ if{f} 0.~$\La(\mho) \subseteq \La(\CS)$ and 
    \begin{compactenum}
        \item\label{seqint:leftside} $\M, \seqint \nvDash \mho \Longrightarrow \M, \seqint \vDash L(\CS)$ for any interpretation $\seqint$ of $\CS$ into a model~$\M$;
        \item\label{seqint:rightside} $\M, \seqint \vDash \mho\Longrightarrow\M, \seqint \vDash R(\CS)$ for any interpretation $\seqint$ of $\CS$ into $\M$;
        \item\label{cond:LIP}
                $\Varp(\mho) \subseteq \Varn(L(\CS)) \cap \Varp(R(\CS))$,  $\Varn(\mho) \subseteq \Varp(L(\CS)) \cap \Varn(R(\CS))$.
    \end{compactenum}                  
In this case, we write $\LIPinterp{\mho}{\CS}$. If, additionally,               
\begin{compactenum}
\setcounter{enumi}{3}
\item\label{cond:ALIP}
  $\Ag(\mho) \subseteq \Ag(L(\CS)) \cap \Ag(R(\CS))$,
\end{compactenum}
then $\mho$ is called an \emph{\prp{SALIP} interpolant}, written $\ALIPinterp{\mho}{\CS}$.
\end{definition}

Let us recall types of split rules and their corresponding interpolant transformations from~\cite{Kuznets18}, adapting the general definitions to the specific case of cross-sequents for multi-agent logic and adding the conditions for common agents. The following notation is used for changing the interpretation of one label to a world~$v$: $\seqint_l^v \ce (\seqint \setminus (\{l\} \times L) \cup\{(l,v)\}$.
\begin{definition}[Split rule types] 
\label{def:splitrules}
Let $\rsf_1$ ($\rsf_2$) be a unary (binary) split cross-sequent rule inferring $\CS$ from $\CS_p$  (from $\CS_1$ and $\CS_2$). Let $\M$ be a  model,  $\seqint$~any interpretation of $\CS$ into $\M$, $i$ and $j$ range over $\{1,2\}$, and $\ast$ range over $\{+,-\}$. 
\begin{compactitem}
    \item
    Rule $\rsf_1$ is called \emph{local} if{f}   for all $\ast$, $\M$, and $\seqint$,
    \[
    \begin{array}{c@{\qquad}c}
    \Varm(L(\CS_p)) \subseteq \Varm(L(\CS)) 
    & 
    \Varm(R(\CS_p)) \subseteq \Varm(R(\CS))
    \\
    \Ag(L(\CS_p)) \subseteq \Ag(L(\CS))
    & 
    \Ag(R(\CS_p)) \subseteq \Ag(R(\CS))
    \\
    \M, \seqint \vDash L(\CS_p) 
    \Longrightarrow
    \M, \seqint \vDash L(\CS)
    &
    \M, \seqint \vDash R(\CS_p) 
    \Longrightarrow
    \M, \seqint \vDash R(\CS)
    \end{array}
    \]
    \item
    Rule $\rsf_2$ is called \emph{conjunctive} if{f},  for all $i\in\{1,2\}$, $\ast$, $\M$, and $\seqint$,
        \[
    \begin{array}{c@{\qquad}c}
    \Varm(L(\CS_i)) \subseteq \Varm(L(\CS)) 
    & 
    \Varm(R(\CS_i)) \subseteq \Varm(R(\CS))
    \\
    \Ag(L(\CS_i)) \subseteq \Ag(L(\CS))
    & 
    \Ag(R(\CS_i)) \subseteq \Ag(R(\CS))
    \\
    (\exists j)\M, \seqint \vDash L(\CS_j)\Longrightarrow\M, \seqint \vDash L(\CS)
    &
    (\forall j)\M, \seqint \vDash R(\CS_j)\Longrightarrow\M, \seqint \vDash R(\CS)
    \end{array}
    \]
    \item
    Rule $\rsf_2$ is called \emph{disjunctive} if{f},   for all $i\in\{1,2\}$, $\ast$, $\M$, and $\seqint$,
        \[
    \begin{array}{c@{\qquad}c}
    \Varm(L(\CS_i)) \subseteq \Varm(L(\CS)) 
    & 
    \Varm(R(\CS_i)) \subseteq \Varm(R(\CS))
    \\
    \Ag(L(\CS_i)) \subseteq \Ag(L(\CS))
    & 
    \Ag(R(\CS_i)) \subseteq \Ag(R(\CS))
    \\
    (\forall j)\M, \seqint \vDash L(\CS_j)\Longrightarrow\M, \seqint \vDash L(\CS)
    &
    (\exists j)\M, \seqint \vDash R(\CS_j)\Longrightarrow\M, \seqint \vDash R(\CS)
    \end{array}
    \]
    \item
    Rule $\rsf_1$ is called \emph{$\Box$-like  for $k\to_a l$}, where $k\ne l \in \La(\CS_p)$ and $a \in A$, if{f} 
    $\La(\CS_p)\setminus \La(\CS)=\{l\}$ and,      for all $\ast$, $\M$, and $\seqint$,
        \[
    \begin{array}{c@{\qquad}c}
    \Varm(L(\CS_p)) \subseteq \Varm(L(\CS)) 
    & 
    \Varm(R(\CS_p)) \subseteq \Varm(R(\CS))
    \\
    \Ag(L(\CS_p)) \subseteq \Ag(L(\CS))
    & 
    \Ag(R(\CS_p)) \subseteq \Ag(R(\CS))
    \\
    \multicolumn{2}{c}{(\exists v \in R_a(\labint{k}))\M, \seqint_l^v \vDash L(\CS_p)\quad\Longrightarrow\quad\M, \seqint \vDash L(\CS)\qquad\quad\strut}
    \\
    \multicolumn{2}{c}{(\forall v \in R_a(\labint{k}))\M, \seqint_l^v \vDash R(\CS_p)\quad\Longrightarrow\quad\M, \seqint \vDash R(\CS)\qquad\quad\strut}
    \end{array}
    \]
  \item
    Rule $\rsf_1$ is called \emph{$\Diamond$-like  for $k\to_a l$}, where $k\ne l \in \La(\CS_p)$ and $a \in A$, if{f} 
    $\La(\CS_p)\setminus \La(\CS)=\{l\}$ and,      for all $\ast$, $\M$, and $\seqint$,
        \[
    \begin{array}{c@{\qquad}c}
    \Varm(L(\CS_p)) \subseteq \Varm(L(\CS)) 
    & 
    \Varm(R(\CS_p)) \subseteq \Varm(R(\CS))
    \\
    \Ag(L(\CS_p)) \subseteq \Ag(L(\CS))
    & 
    \Ag(R(\CS_p)) \subseteq \Ag(R(\CS))
    \\
    \multicolumn{2}{c}{(\forall v \in R_a(\labint{k}))\M, \seqint_l^v \vDash L(\CS_p)\quad\Longrightarrow\quad\M, \seqint \vDash L(\CS)\qquad\quad\strut}
    \\
    \multicolumn{2}{c}{(\exists v \in R_a(\labint{k}))\M, \seqint_l^v \vDash R(\CS_p)\quad\Longrightarrow\quad\M, \seqint \vDash R(\CS)\qquad\quad\strut}
    \end{array}
    \]
 \end{compactitem}
\end{definition}

\begin{figure}[t]
\centering{
\fbox{\parbox{0.98\textwidth}{
\[
\vlinf{LR\id}{}{\CS_\ell\vlfill{p,\Gamma;^l\Delta, \notp}}{}
\qquad
\vlinf{RL\id}{}{\CS_\ell\vlfill{\notp, \Gamma;^l\Delta,p}}{}
\qquad
\vlinf{R\top}{}{\CS_\ell\vlfill{\Gamma;^l\Delta,\top}}{}
\]
\[
\vlinf{RR\id}{}{\CS_\ell\vlfill{ \Gamma;^l\Delta,p,\notp}}{}
\qquad
\vlinf{LL\id}{}{\CS_\ell\vlfill{ p,\notp,\Gamma;^l\Delta}}{}
\qquad
\vlinf{L\top}{}{\CS_\ell\vlfill{\top,\Gamma;^l\Delta}}{}
\]
\[
\vlinf{L\vee}{}{\CS_\ell\vlfill{\phi\vee\psi, \Gamma;^l\Delta}}{\CS_\ell\vlfill{\phi\vee\psi,\phi,\psi, \Gamma;^l\Delta}}{} 
\qquad
\vliinf{L\wedge}{}{\CS_\ell\vlfill{\phi\wedge\psi,\Gamma;^l\Delta}}{\CS_\ell\vlfill{\phi\wedge\psi,\phi,\Gamma;^l\Delta}}{\CS_\ell\vlfill{\phi\wedge\psi,\psi,\Gamma;^l\Delta}}{} 
\]
\[
\vlinf{R\vee}{}{\CS_\ell\vlfill{ \Gamma;^l\Delta,\phi\vee\psi}}{\CS_\ell\vlfill{ \Gamma;^l\Delta,\phi\vee\psi,\phi,\psi}}{}
\qquad
\vliinf{R\wedge}{}{\CS_\ell\vlfill{\Gamma;^l\Delta,\phi\wedge\psi}}{\CS_\ell\vlfill{\Gamma;^l\Delta, \phi\wedge\psi,\phi}}{\CS_\ell\vlfill{\Gamma;^l\Delta, \phi\wedge\psi,\psi}}{} 
\]
\[
\vlinf{L\Box \in a}{}{\CS_\ell\vlfill{[\Box_a \phi,\Gamma;^k\Delta, \onlyb|H]_a}}{\CS_\ell\vlfill{[\phi;^l|\Box_a \phi,\Gamma;^k\Delta, \onlyb|H]_a}}{} 
\qquad
\vlinf{R\Box \in a}{}{\CS_\ell\vlfill{[\Gamma;^k\Delta, \Box_a \phi, \onlyb|H]_a}}{\CS_\ell\vlfill{[;^l\phi|\Gamma;^k\Delta, \Box_a \phi,\onlyb|H]_a}}{} 
\]
\[
\vlinf{L\Box \hasbracket a}{}{\CS_\ell\vlfill{\Box_a \phi, \Gamma;^k\Delta,[H]_a, \onlyb}}{\CS_\ell\vlfill{\Box_a \phi, \Gamma;^k\Delta, [\phi;^l|H]_a, \onlyb}}{} 
\qquad
\vlinf{R\Box \hasbracket a}{}{\CS_\ell\vlfill{ \Gamma;^k\Delta,\Box_a \phi,[H]_a, \onlyb}}{\CS_\ell\vlfill{ \Gamma;^k\Delta, \Box_a \phi,[;^l\phi|H]_a, \onlyb}}{} 
\]
\[
\vlinf{L\Box\nobracket a}{\scriptscriptstyle \Box_a \phi \nobracket a}{\CS_\ell\vlfill{\Box_a \phi,\Gamma;^k\Delta}}{\CS_\ell\vlfill{\Box_a \phi,\Gamma;^k\Delta, [\phi;^l]_a}}{} 
\qquad
\vlinf{R\Box\nobracket a}{\scriptscriptstyle \Box_a \phi \nobracket a}{\CS_\ell\vlfill{\Gamma;^k\Delta,\Box_a \phi}}{\CS_\ell\vlfill{\Gamma;^k\Delta,\Box_a \phi, [;^l\phi]_a}}{} 
\]
\[
\vlinf{L\Diamond \in a}{}{\CS_\ell\vlfill{[\Diamond_a \phi,\Gamma;^k\Delta, \onlyb|\Pi;^l\Sigma,\onlyb'|H]_a}}{\CS_\ell\vlfill{[\Diamond_a \phi,\Gamma;^k\Delta, \onlyb|\phi,\Pi;^l\Sigma,\onlyb'|H]_a}}{} 
\qquad
\vlinf{R\Diamond \in a}{}{\CS_\ell\vlfill{[\Gamma;^k\Delta,\Diamond_a \phi, \onlyb|\Pi;^l\Sigma,\onlyb'|H]_a}}{\CS_\ell\vlfill{[\Gamma;^k\Delta,\Diamond_a \phi, \onlyb|\Pi;^l\Sigma,\phi,\onlyb'|H]_a}}{} 
\]
\[
\vlinf{L\Diamond \hasbracket a}{}{\CS_\ell\vlfill{\Diamond_a \phi,\Gamma;^k\Delta, \onlyb, [\Pi;^l\Sigma, \onlyb'|H]_a}}{\CS_\ell\vlfill{\Diamond_a \phi,\Gamma;^k\Delta, \onlyb, [\phi,\Pi;^l\Sigma, \onlyb'|H]_a}}{} 
\quad
\vlinf{R\Diamond \hasbracket a}{}{\CS_\ell\vlfill{\Gamma;^k\Delta,\Diamond_a \phi, \onlyb, [\Pi;^l\Sigma, \onlyb'|H]_a}}{\CS_\ell\vlfill{\Gamma;^k\Delta, \Diamond_a \phi,\onlyb, [\Pi;^l\Sigma,\phi, \onlyb'|H]_a}}{} 
\]
\[
\vlinf{L\Diamond T a}{}{\CS_\ell\vlfill{\Diamond_a \phi,\Gamma;^l\Delta}}{\CS_\ell\vlfill{\Diamond_a \phi,\phi,\Gamma;^l\Delta}}{} 
\qquad
\vlinf{R\Diamond T a}{}{\CS_\ell\vlfill{\Gamma;^l\Delta, \Diamond_a \phi}}{\CS_\ell\vlfill{\Gamma;^l\Delta,\Diamond_a \phi,\phi}}{} 
\]
\[
\vlinf{L\Diamond \hasparent a}{}{\CS_\ell\vlfill{\Pi;^l\Sigma,\onlyb,[\Diamond_a \phi,\Gamma;^k\Delta,\onlyb'|H]_a}}{\CS_\ell\vlfill{\phi,\Pi;^l\Sigma,\onlyb,[\Diamond_a \phi,\Gamma;^k\Delta,\onlyb'|H]_a}}{} 
\quad
\vlinf{R\Diamond \hasparent a}{}{\CS_\ell\vlfill{\Pi;^l\Sigma,\onlyb,[\Gamma;^k\Delta,\Diamond_a \phi,\onlyb'|H]_a}}{\CS_\ell\vlfill{\Pi;^l\Sigma,\phi,\onlyb,[\Gamma;^k\Delta,\Diamond_a \phi,\onlyb'|H]_a}}{} 
\]
}}}
\caption{Split calculus $\CSfive$ with labels. Here $\onlyb$ and $\onlyb'$ represent any number (possibly zero) of brackets indexed with pairwise distinct agents such that the resulting cross-sequent is proper.}
\label{fig:S5nsplit}
\end{figure}

\begin{lemma}[Transformations]
\label{lem:transformations}
Let rules $\rsf_1$ and $\rsf_2$ be as in \cref{def:splitrules}. Let \prp{IP} stand for either \prp{SLIP} or \prp{SALIP} interpolant.
\begin{compactenum}
    \item\label{case:local} If $\rsf_1$ is a local rule and 
    $\interp{\mho}{\CS_p}$, then $\interp{\mho}{\CS}$
    \item\label{case:conjunctive} If $\rsf_2$ is a conjunctive rule and 
    $\interp{\mho_i}{\CS_i}$ for $i=1,2$, then $\interp{\mho_1 \spconj \mho_2}{\CS}$.
    \item\label{case:disjunctive} If $\rsf_2$ is a disjunctive rule and 
    $\interp{\mho_i}{\CS_i}$ for $i=1,2$, then $\interp{\mho_1 \spdisj \mho_2}{\CS}$.
    \item\label{case:boxlike} If $\rsf_1$ is a $\Box$-like rule for $k \to_a l$ and 
    $\LIPinterp{\bigspconj\nolimits_{i=1}^m (\labe{l}{\phi_i} \spdisj \mho_i)}{\CS_p}$, where $l$ does not occur in any of $\mho_i$, then $\LIPinterp{\bigspconj\nolimits_{i=1}^m (\labe{k}{\Box_a\phi_i} \spdisj \mho_i)}{\CS}$.
   \item\label{case:diamondlike} If $\rsf_1$ is a $\Diamond$-like rule for $k \to_a l$ and 
    $\LIPinterp{\bigspdisj\nolimits_{i=1}^m (\labe{l}{\phi_i} \spconj \mho_i)}{\CS_p}$, where $l$ does not occur in any of $\mho_i$, then $\LIPinterp{\bigspdisj\nolimits_{i=1}^m (\labe{k}{\Diamond_a\phi_i} \spconj \mho_i)}{\CS}$.
\end{compactenum}
\end{lemma}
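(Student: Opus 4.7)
The plan is, in each case, to verify the applicable clauses of \cref{def:seqint}---clauses~0--3 for \prp{SLIP} in all five cases and additionally clause~4 for \prp{SALIP} in cases~\ref{case:local}--\ref{case:disjunctive} (where \prp{IP} covers both)---by unpacking the semantic characterizations of split rule types from \cref{def:splitrules}. The label-containment clause~0 is immediate throughout: in the Boolean cases the labels of the new interpolant are just the union of those in $\mho_1$ and $\mho_2$, which lie in $\La(\CS_p)=\La(\CS)$, and in the modal cases the only label dropped from $\CS_p$ to $\CS$ is~$l$, which by hypothesis does not occur in any $\mho_i$ and is replaced by $k\in\La(\CS)$.

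Case~\ref{case:local} is routine: the four or five clauses transfer from $\CS_p$ to $\CS$ directly via the inclusions built into the local-rule definition. In Case~\ref{case:conjunctive}, clauses~1 and~2 for $\mho_1\spconj\mho_2$ follow from the $\exists j/\forall j$ pattern of the conjunctive-rule definition: if $\M,\seqint\nvDash\mho_1\spconj\mho_2$, then some $\mho_j$ is falsified, whence $L(\CS_j)$ and hence $L(\CS)$ hold; if $\M,\seqint\vDash\mho_1\spconj\mho_2$, then every $\mho_j$ is satisfied, whence every $R(\CS_j)$ and hence $R(\CS)$ hold. Clauses~3 and~4 hold because $\Varm(\mho_1\spconj\mho_2)=\Varm(\mho_1)\cup\Varm(\mho_2)$ and $\Ag(\mho_1\spconj\mho_2)=\Ag(\mho_1)\cup\Ag(\mho_2)$, each of which is absorbed into the conclusion's sets through the conjunctive-rule inclusions. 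Case~\ref{case:disjunctive} is dual with $\mho_1\spdisj\mho_2$.

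The main work is in cases~\ref{case:boxlike} and~\ref{case:diamondlike}; I concentrate on $\Box$-like, with $\Diamond$-like being dual (swap $\spconj\leftrightarrow\spdisj$, $\Box\leftrightarrow\Diamond$, and the roles of left/right). For clause~1, suppose $\M,\seqint\nvDash\bigspconj_{i=1}^m(\labe{k}{\Box_a\phi_i}\spdisj\mho_i)$; then for some~$i$ we have $\M,\labint{k}\nvDash\Box_a\phi_i$ and $\M,\seqint\nvDash\mho_i$, so there is $v\in R_a(\labint{k})$ with $\M,v\nvDash\phi_i$. Set $\seqint':=\seqint_l^v$. Since $l\notin\La(\mho_i)$ we still have $\M,\seqint'\nvDash\mho_i$, and clearly $\M,\seqint'\nvDash\labe{l}{\phi_i}$, so the premise interpolant fails at~$\seqint'$, and the hypothesis yields $\M,\seqint'\vDash L(\CS_p)$; the $\exists v$-clause of the $\Box$-like definition then delivers $\M,\seqint\vDash L(\CS)$. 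For clause~2, suppose the new interpolant holds at~$\seqint$ and fix any $v\in R_a(\labint{k})$, setting $\seqint':=\seqint_l^v$: for each~$i$, either $\M,\seqint\vDash\mho_i$ (whence $\M,\seqint'\vDash\mho_i$ because $l\notin\La(\mho_i)$) or $\M,\labint{k}\vDash\Box_a\phi_i$ (whence $\M,v\vDash\phi_i$, i.e., $\M,\seqint'\vDash\labe{l}{\phi_i}$); so the premise interpolant holds at~$\seqint'$, giving $\M,\seqint'\vDash R(\CS_p)$, and the $\forall v$-clause yields $\M,\seqint\vDash R(\CS)$. Clause~3 is immediate since $\Varm(\labe{k}{\Box_a\phi_i})=\Varm(\phi_i)=\Varm(\labe{l}{\phi_i})$, so the polarized variable sets of the new and old interpolants coincide and are transported by the $\Box$-like inclusions.

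The main obstacle lies precisely in cases~\ref{case:boxlike} and~\ref{case:diamondlike}: the premise interpolant must have the specific shape of a conjunction of disjunctions (for $\Box$-like) or a disjunction of conjunctions (for $\Diamond$-like), so that its outermost connective matches the quantifier alternation hardwired into the modal-rule semantics and the modality can be pushed outwards onto~$k$. The syntactic side-condition $l\notin\La(\mho_i)$ is indispensable, as it is what permits reinterpreting~$l$ as an arbitrary $R_a$-successor of $\labint{k}$ without disturbing the semantic value of the surrounding subinterpolants~$\mho_i$, which is what makes both the ``some $v$'' argument for the left side and the ``every $v$'' argument for the right side go through.
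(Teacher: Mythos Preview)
Your proof is correct and follows essentially the same approach as the paper's own proof: both treat cases~\ref{case:local}--\ref{case:disjunctive} as routine checks against the inclusions in \cref{def:splitrules}, then give the detailed semantic argument only for the $\Box$-like case (with the $\Diamond$-like case dual), using the reinterpretation $\seqint_l^v$ and the side-condition $l\notin\La(\mho_i)$ in exactly the same way. Your write-up is in fact slightly more explicit than the paper's in spelling out clause~0, the conjunctive case, and the role of the normal-form shape of the premise interpolant.
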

\begin{proof}
    The proof goes along the lines of the proofs of Lemmas 4.18, 4.21, 4.24, and 4.27 for local, conjunctive,  disjunctive, and $\Box$/$\Diamond$-like rules respectively in~\cite{Kuznets18}. The only non-trivial difference is that for the \prp{ALIP} case for \ref{case:local}.--\ref{case:disjunctive}. there is an additional, easily checkable condition on agents and  that in~\cite{Kuznets18} cases for \ref{case:boxlike}.--\ref{case:diamondlike}. had $\mho_i$ spelled out as a conjunction/disjunction without $l$. For space considerations, we only give the argument for $\Box$-like rules. The label conditions 0. and  atom conditions~\ref{cond:LIP}. from \cref{def:seqint} are obvious. For the left side condition~\ref{seqint:leftside}., let $\M,\seqint \nvDash \bigspconj\nolimits_{i=1}^m (\labe{k}{\Box_a\phi_i} \spdisj \mho_i)$ for the conclusion interpolant. Then there is $i$ such that   $\M, \labint{k} \nvDash \Box_a\phi_i$ and $\M, \seqint \nvDash \mho_i$. Hence, $\M, v \nvDash \phi_i$ for some $v \in R_a(\labint{k})$. Since $l \notin \La(\mho_i)$, we have $\M, \seqint_l^v \nvDash \bigspconj\nolimits_{i=1}^m (\labe{l}{\phi_i} \spdisj \mho_i)$ for the premise interpolant. We get  $\M, \seqint_l^v \vDash L(\CS_p)$ by~\ref{seqint:leftside}. for the premise and $\M, \seqint \vDash L(\CS)$ by $\Box$-likeness.

    For the right side condition~\ref{seqint:rightside}., let $\M,\seqint \vDash \bigspconj\nolimits_{i=1}^m (\labe{k}{\Box_a\phi_i} \spdisj \mho_i)$ for the conclusion interpolant. Then for all $i$ either  $\M, \labint{k} \vDash \Box_a\phi_i$ or $\M, \seqint \vDash \mho_i$. Hence, for all $v \in R_a(\labint{k})$ and all $i$, either $\M, v \vDash \phi_i$ or $\M, \seqint \vDash \mho_i$. Since $l \notin \La(\mho_i)$ for any $i$, we have $\M, \seqint_l^v \vDash \bigspconj\nolimits_{i=1}^m (\labe{l}{\phi_i} \spdisj \mho_i)$ for all $v \in R_a(\labint{k})$ for the premise interpolant. We get  $\M, \seqint_l^v \vDash R(\CS_p)$ for all $v \in R_a(\labint{k})$ by~\ref{seqint:rightside}. for the premise and $\M, \seqint \vDash R(\CS)$ by $\Box$-likeness. 
    \qed
\end{proof}
\Cref{lem:transformations} provides only a partial solution for the $\Box$-like/$\Diamond$-like rules since the transformations from~\cite{Kuznets18}, while preserving \prp{SLIP}, may produce an interpolant violating \ref{cond:ALIP}.
\begin{example}
\label{ex:nosalip}
     Split cross-sequent $
     ;\Box_a p, \Diamond_a \notp$ is derivable in the split $\CSfive$, but the algorithm from~\cite{Kuznets18} produces $\labe{0}{\Box_a \top}$ as an interpolant  even though $a$-modalities are not present on the left side:
     \[
     \vlderivation{
     \vlin{R\Box\nobracket a}{}{
    \LIPinterp{\labe{0}{\Box_a\top}}{;^0\Box_a p, \Diamond_a \notp}
    }{%
            \vlin{R\Diamond\hasbracket a}{}{%
                \ALIPinterp{\labe{1}{\top}}{;^0\Box_a p, \Diamond_a \notp, [;^1p]_a}
                }%
                {%
                    \vlin{RR\id}{}{%
                        \ALIPinterp{\labe{1}{\top}}{;^0\Box_a p, \Diamond_a \notp, [;^1p,\notp]_a}
                        }%
                        {\vlhy{}}
                }
            }
    }
     \]
     The offending $a$-modality $\Box_a$ is created by the transformation for the $\Box$-like split rule $R\Box\nobracket a$ in the process of removing a formula (here $p$) from an $a$-bracket.
     Therefore, this interpolant is a \prp{SLIP} but not a \prp{SALIP} interpolant.
     Note, however, that $ \Box_a \top \leftrightarrow \top \in \Sfive_n$ and $\Box_a \bot \leftrightarrow \bot\in \Sfive_n$. Thus, $\ALIPinterp{\labe{0}{\top}}{;^0\Box_a p, \Diamond_a \notp}$. 
\end{example}
To show that such an additional transformation from a \prp{SLIP} to a  \prp{SALIP} interpolant always exists, we need an auxiliary lemma that can be  proved by induction on the split derivation:
     \begin{lemma}
     \label{lem:aux}
         For every split cross-sequent $\CS'\vlfill{[\CS''| H]_a}$  in a split  proof of a $\CS$ with $d(\CS)=0$ and for every formula $\phi$ in $O(\CS'')$ with $O \in \{R,L\}$, there is a subformula~$\heartsuit_a \psi$ in~$O(\CS)$  for some $\heartsuit \in \{\Box,\Diamond\}$ such that $\phi$ is a subformula of $\psi$. 
     \end{lemma}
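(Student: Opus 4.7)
My plan is induction on the number of rule applications along the path of the proof tree from the end-sequent $\CS$ up to the target cross-sequent $\CS'\vlfill{[\CS''|H]_a}$. To make the induction go through, I would strengthen the hypothesis by simultaneously proving an auxiliary analyticity claim (I): every formula on the left (resp.\ right) side of any component of any cross-sequent $\CT$ appearing in the proof is a subformula of some formula in $L(\CS)$ (resp.\ $R(\CS)$). Both (I) and the stated lemma, call it (II), hold trivially at the base case $\CT=\CS$, the latter vacuously since $d(\CS)=0$ rules out any brackets.

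For the inductive step, (I) is immediate because every split rule in \cref{fig:S5nsplit} is analytic on each side. For (II), I would examine each rule and each $a$-bracket occurrence in the premise. Brackets unchanged by the rule inherit (II) directly from the conclusion by IH. For any formula $\phi$ newly placed in $O(\CS'')$, I trace it back to the rule's principal formula $\chi$, of which $\phi$ is a subformula. The main structural observation, enabled by the consolidation conditions, is that whenever $\chi$ lies in a component that is itself inside the same $a$-bracket (or inside some outer $a$-bracket transitively containing it), IH (II) applied to $\chi$ yields some $\heartsuit_a \psi$ subformula of $O(\CS)$ with $\chi$, and hence $\phi$, a subformula of $\psi$. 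This handles the propositional rules, $\Diamond T a'$, $\Box \in b$, $\Diamond \in b$, and the variants of $\Diamond \hasparent$, $\Box \hasbracket$, $\Diamond \hasbracket$ for $b \ne a$.

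The remaining cases are $\Box \nobracket a$, $\Box \hasbracket a$, and $\Diamond \hasbracket a$ (together with their right-split duals), in which the $a$-modal principal formula sits in a parent component $C_{par}$ that, by consolidation condition~(b), cannot be a direct $a$-child of any bracket. If $C_{par}$ nevertheless lies inside some outer $a$-bracket (reached via brackets of other agents), then IH (II) applied to the principal suffices; otherwise (I) implies that the principal itself is a subformula of $O(\CS)$, so it directly plays the role of the required $\heartsuit_a \psi$.

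The main obstacle, I expect, is the bookkeeping for the relation of \emph{being inside an $a$-bracket} across rules that shuffle formulas between components, most notably $\Diamond \hasparent a$. Here one verifies, using the consolidation conditions, that the $a$-parent of an $a$-child either lies in the same $\CS''$ as the child (when the child is a proper descendant of the root of $\CS''$) or lies outside $\CS''$ entirely (when the child is precisely the root of $\CS''$), in which case the particular $\CS''$ under consideration is undisturbed. Once these positional invariants are tabulated, each rule case collapses to the observations above.
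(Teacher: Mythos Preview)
Your proposal is correct and takes exactly the approach the paper indicates: the paper merely states that the lemma ``can be proved by induction on the split derivation'' without giving details, and your induction on the path from the endsequent together with the auxiliary side-wise analyticity invariant (I) is a sound way to carry this out. Two small remarks: you omitted the case $\Box\nobracket b$ for $b\neq a$ from your first group (it is handled identically to the others there), and in your second group claim (I) alone already suffices---the principal $\heartsuit_a\phi$ is a subformula of $O(\CS)$ by (I) regardless of whether $C_{par}$ sits inside an outer $a$-bracket, so the case split there is harmless but unnecessary.
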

\begin{theorem}[\prp{SALIP} for $\CSfive$]
    If $\vdash_{\CSfive} \CS$ and $d(\CS)=0$, then very split of~$\CS$ has an \prp{SALIP}  interpolant.
\end{theorem}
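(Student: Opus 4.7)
The proof proceeds by induction on the split derivation of $\CS$. For axioms, the constructed interpolants involve only $\top$, $\bot$, or literals, so $\Ag(\mho) = \varnothing$ and \prp{SALIP} is immediate. For local, conjunctive, and disjunctive split rules (all propositional cases and $\Diamond T$), I apply parts \ref{case:local}--\ref{case:disjunctive} of \cref{lem:transformations}: the agent inclusions $\Ag(O(\CS_i)) \subseteq \Ag(O(\CS))$ built into \cref{def:splitrules} guarantee that agents permitted in premise interpolants remain common in the conclusion, and combining by $\spconj$ or $\spdisj$ introduces no new agents, so \prp{SALIP} is preserved.

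The nontrivial case, and the main obstacle, is a $\Box$-like or $\Diamond$-like split rule for some $k \to_a l$, as \cref{ex:nosalip} makes clear. Parts \ref{case:boxlike}--\ref{case:diamondlike} of \cref{lem:transformations} produce an \prp{SLIP} interpolant of shape $\bigspconj_i (\labe{k}{\Box_a \phi_i} \spdisj \mho_i)$ or $\bigspdisj_i (\labe{k}{\Diamond_a \phi_i} \spconj \mho_i)$, thereby injecting agent $a$. The principal formula of the rule places $a$ on the side where it appears; if $a$ is already on both sides of $\CS$, we are done. Otherwise, say $a \notin \Ag(L(\CS))$, I invoke \cref{lem:aux} to conclude that the L side at label $l$ (and at every descendant label inside the $a$-bracket) is empty throughout the subderivation above the rule.

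The plan is to leverage this one-sided emptiness by strengthening the inductive hypothesis: whenever a subderivation ends at a cross-sequent with an $a$-bracket whose one side is uniformly empty, I require that the $\phi_i$-components of the interpolant at the corresponding labels be provably equivalent in $\Sfive_n$ to $\top$ (in the $\Box$-like transformation) or $\bot$ (in the $\Diamond$-like transformation). The $\Sfive_n$-validities $\Box_a \top \equiv \top$ and $\Diamond_a \bot \equiv \bot$ then allow the $a$-wrapper to be dropped from the lifted interpolant, yielding a \prp{SALIP} interpolant. The hard part is verifying that this strengthened invariant is preserved under every rule that can appear above the modal step; in particular, the $\Diamond$-propagation rules transfer formulas between cluster members, but \cref{lem:aux} ensures any such transferred formula is a subformula of a $\heartsuit_a \psi$ on the same side of the endsequent $\CS$, so the one-sidedly empty configuration is stable and the simplification is legitimate at every relevant stage.
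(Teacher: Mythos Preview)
Your proposal is essentially correct and rests on the same two ingredients as the paper's proof: \cref{lem:aux} to link the agent~$a$ introduced by a modal transformation to the content of the $a$-bracket on each side, and the $\Sfive_n$-equivalences $\heartsuit_a\top\leftrightarrow\top$, $\heartsuit_a\bot\leftrightarrow\bot$ to eliminate the spurious modality when one side is empty there. The packaging differs, however. You case-split on whether $a$ is common and carry a strengthened invariant (``one side empty inside the bracket $\Rightarrow$ the $\phi_i$ at those labels are trivial'') through the induction. The paper instead runs the plain \prp{SLIP} algorithm and, at each $\Box$-/$\Diamond$-like step, case-splits on whether some $\phi_i$ contains a literal: if not, each $\phi_i$ is a modality/boolean combination of $\top$ and~$\bot$ and hence equivalent to one of them; if so, the literal traces back to an $LR\id$ or $RL\id$ axiom at a label inside the $a$-bracket, so both $L$ and $R$ are nonempty there, and \cref{lem:aux} yields $a\in\Ag(L(\CS))\cap\Ag(R(\CS))$. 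This is the contrapositive of your argument but avoids threading an extra invariant through all rules.

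Two small corrections. First, your claim that the $\phi_i$ are equivalent to $\top$ in the $\Box$-like case (resp.~$\bot$ in the $\Diamond$-like case) is too specific: the CNF normalisation writes a conjunct without any $l$-atom as $\labe{l}{\bot}\spdisj\mho_i$, so $\phi_i=\bot$ also occurs, and dually $\phi_i=\top$ in the DNF case. This is harmless because all four equivalences $\Box_a\top\leftrightarrow\top$, $\Box_a\bot\leftrightarrow\bot$, $\Diamond_a\top\leftrightarrow\top$, $\Diamond_a\bot\leftrightarrow\bot$ hold in $\Sfive_n$, but your invariant should read ``equivalent to $\top$ or $\bot$''. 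Second, not only $\Diamond T a$ but all four $\Diamond$-rules are local (none creates a fresh label), so they all fall under \cref{lem:transformations}.\ref{case:local}; only the three $\Box$-rules are $\Box$-/$\Diamond$-like.
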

\begin{proof}
    We start by proving that every derivable cross-sequent has a \prp{SLIP}  interpolant.
    As usual, the proof is by induction on the split $\CSfive$. We assume that all components of sequents in the proof tree are labeled coherently, i.e., satisfying conditions (A) and (B) from \cref{def:proof_label}. It is easy to assign interpolants to the axiom leaves. In particular,  $\ALIPinterp{\labe{l}{\top}}{\CS_\ell\vlfill{\Gamma;^l\Delta,\top}}$ (resp. $\ALIPinterp{\labe{l}{\bot}}{\CS_\ell\vlfill{\top,\Gamma;^l\Delta}}$) for the two split versions of $\id$
     since both $\labe{l}{\top}$ and $R(\CS\vlfill{\Gamma;\Delta,\top})=R(\CS)\vlfill{\Delta,\top}$ are always true (resp. $\labe{l}{\bot}$ is always false while $L(\CS\vlfill{\top,\Gamma;\Delta})=L(\CS)\vlfill{\top,\Gamma}$ is always true). For the same reason, $\ALIPinterp{\labe{l}{\top}}{\CS_\ell\vlfill{\Gamma;^l\Delta,p,\notp}}$ and $\ALIPinterp{\labe{l}{\bot}}{\CS_\ell\vlfill{p,\notp,\Gamma;^l\Delta}}$ for two of the four split versions of rule $\id$.
     Further, $\ALIPinterp{\labe{l}{p}}{\CS_\ell\vlfill{\notp,\Gamma;^l\Delta,p}}$ and $\ALIPinterp{\labe{l}{\notp}}{\CS_\ell\vlfill{p,\Gamma;^l\Delta,\notp}}$ for the remaining two splits of $\id$.

     For
      all unary and binary rules of $\CSfive$, the interpolant transformations from \cref{lem:transformations} can be used to construct  a \prp{SLIP}  interpolant for the endsequent. For the use of \ref{case:boxlike}.--\ref{case:diamondlike}.,  it is necessary to transform an interpolant into the requisite format, which is always possible using boolean reasoning akin to the transformation into the CNF/DNF. 

     To find a \prp{SALIP} interpolant, we use the assumption that the endsequent is of depth $0$, hence, contains no brackets. We run the algorithm for \prp{LIP} interpolants, with the following modification for every $\Box$-like or $\Diamond$-like rule $\rsf_1$ that creates  an additional $a$-modality: the premise of this rule must be of the form $\CS'\vlfill{[\CS''|H]_a}$ (with a possibly empty $H$), where the root of $\CS''$ is assigned label $l$.
     \begin{compactenum}
         \item If no~$\phi_i$ from \ref{case:boxlike}. or \ref{case:diamondlike}. (as appropriate) contains literals, then each of them is logically equivalent to either $\top$ or $\bot$ in $\Sfive_n$. Since, in addition to the equivalences mentioned in \cref{ex:nosalip}, $\Diamond_a \top \leftrightarrow \top\in \Sfive_n$ and $\Diamond_a \bot \leftrightarrow \bot\in \Sfive_n$, it is possible to replace each instance of  $\labe{k}{\Box_a \phi_i}$ or $\labe{k}{\Diamond_a \phi_i}$ with a logically equivalent $\labe{k}{\top}$ or $\labe{k}{\bot}$ that would add neither new atoms nor new agents. 
         \item Otherwise, at least one $\phi_i$ contains a literal with label $l$, which can only be created in a leaf of the proof tree for one of the two split rules $RL\id$ or $LR\id$. In either case, this leaf has the form $\CS^\dag\vlfill{[\CS^{\dag\dag}|H']_a}$ where the root of~$\CS^{\dag\dag}$ has the form $x, \Gamma;^l \Delta, y$ with $\{x,y\}=\{p,\notp\}$. In other words, a proof of $\CS$ contains formulas $x$ in  $L(\CS^{\dag\dag})$ and $y$ in  $R(\CS^{\dag\dag})$. By \cref{lem:aux}, there exist subformulas $\heartsuit^x_a \psi_x$ in $L(
         \CS)$ and $\heartsuit^y_a \psi_y$ in $R(\CS)$ where $\heartsuit^x,\heartsuit^y\in\{\Box,\Diamond\}$. Thus, $a \in \Ag(L(\CS))\cap\Ag(R(\CS))$ for the only agent possibly introduced to the interpolant by this transformation.
     \end{compactenum}
     In either case, given  a \prp{SALIP} interpolant for the premise, the result would be a \prp{SALIP} interpolant for the conclusion.\qed  
\end{proof}

\begin{corollary}
    $\Sfive_n$ has the \prp{ALIP}.
\end{corollary}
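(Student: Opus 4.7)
The plan is to deduce \prp{ALIP} at the formula level from the \prp{SALIP} theorem for $\CSfive$ just proved. Given $\phi \to \psi \in \Sfive_n$, I would first rewrite the validity of $\phi \to \psi = \neg\phi \vee \psi$ as that of the one-sided cross-sequent $\CS \ce \neg\phi, \psi$, which has depth $0$ since it contains no brackets. Semantic completeness (\cref{th:completeness}) supplies a $\CSfive$-derivation of $\CS$, and splitting its unique component (labeled $0$) as $\neg\phi ;^0 \psi$ puts us exactly in the situation covered by the preceding theorem. Thus there is an \prp{SALIP} interpolant $\mho$ with $\La(\mho) \subseteq \{0\}$.

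Since $\mho$ uses only a single label, I would flatten it to an ordinary formula $\delta \ce \mho^\bullet$ via the label-forgetting projection of \cref{def:multiformula}. A straightforward induction on multiformula structure shows that $\M, \seqint \vDash \mho$ iff $\M, \seqint(0) \vDash \delta$ for any interpretation $\seqint$, using that $\labe{0}{\phi_1} \spconj \labe{0}{\phi_2}$ is semantically equivalent to $\labe{0}{\phi_1 \wedge \phi_2}$, and dually for $\spdisj$. Fixing an arbitrary world $w$ of an arbitrary model $\M$ and setting $\seqint(0) \ce w$, the left- and right-side interpolant conditions from \cref{def:seqint} translate directly into $\M, w \vDash \phi \to \delta$ and $\M, w \vDash \delta \to \psi$, so both implications are $\Sfive_n$-valid.

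The common-language requirements then follow from the remaining \prp{SALIP} conditions by tracking polarities through negation: since $\Varp(\neg\phi) = \Varn(\phi)$, $\Varn(\neg\phi) = \Varp(\phi)$, and $\Ag(\neg\phi) = \Ag(\phi)$, the atom condition of \cref{def:seqint} gives $\Varp(\delta) \subseteq \Varp(\phi) \cap \Varp(\psi)$ and $\Varn(\delta) \subseteq \Varn(\phi) \cap \Varn(\psi)$, while the agent condition gives $\Ag(\delta) \subseteq \Ag(\phi) \cap \Ag(\psi)$. The whole argument is bookkeeping with no serious obstacle; the substantive work has already been done by the \prp{SALIP} theorem, and the only small subtlety is the polarity swap induced by negation, which the definitions are set up to absorb cleanly.
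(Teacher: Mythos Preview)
Your proposal is correct and follows essentially the same route as the paper's proof: pass from $\phi\to\psi\in\Sfive_n$ to the depth-$0$ split cross-sequent $\neg\phi;^0\psi$ via completeness, apply the \prp{SALIP} theorem to obtain a multiformula interpolant on the single label~$0$, then flatten via $\mho^\bullet$ and read off the two implications plus the common-language conditions. The only difference is that you spell out the polarity swap under negation more explicitly, whereas the paper simply declares the common-language conditions immediate.
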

\begin{proof}
    Let $\phi \to \psi\in\Sfive_n$. By \cref{th:completeness}, $\vdash_{\CSfive} \neg \phi;^0 \psi$. Since this is a depth~$0$ split cross-sequent, $\ALIPinterp{\mho}{\neg \phi;^0 \psi}$ for some $\mho$ with $0$ as the only label. It is straightforward to show that $\M, \seqint \vDash \mho$ if{f} $\M, \labint{0} \vDash \mho^\bullet$ for the projection~$\bullet$ from \cref{def:multiformula}. Hence, $\mho^\bullet$ is an \prp{ALIP} interpolant of $\phi \to \psi$. Since the common language conditions are immediate, we only show that $\phi \to \mho^\bullet \in \Sfive_n$. If $\M, w \vDash \phi$, then $\M, \labint{0} \nvDash \neg \phi$ for the interpretation $\seqint\ce \{(0,w)\}$. Hence, $\M, \seqint \nvDash L(\neg\phi;^0\psi)$. By \cref{def:seqint}.\ref{seqint:leftside}, $\M, \seqint \vDash \mho$. Hence, $\M, \labint{0} 
    \vDash \mho^\bullet$, i.e., $\M, w \vDash \mho^\bullet$. The case of $\mho^\bullet \to \psi \in \Sfive_n$ is even simpler.\qed
\end{proof}

\section{Towards Distributed Knowledge}
\label{sect:distr}
In their recent paper~\cite{MuraiS20}, Murai and Sano developed cut-free sequent calculi for logics of distributed knowledge based on $\K$, {\sf KT}, {\sf KD}, and {\sf S4} modalities and used these calculi for to prove the \prp{CIP} for both atoms and agents. They left the question of $\Sfive$-based distributed knowledge open, suggesting to either use an analytic sequent calculus with cut or to use generalized cut-free sequent calculi akin to the type developed in the current paper. The former method, however  is not suitable for proving the \prp{LIP} as the cut rule does not respect the polarities. One of our goals was to use cross-sequents to prove the stronger \prp{ALIP} for $\Sfive$-based distributed knowledge. 

Unfortunately, the interaction of several distributed modalities within the hierarchy of agent groups does not allow for a direct application of cross-sequents. The natural way to use cross-sequents would be to consider each agent group to have its own modality and supply these modalities with a hierarchy of strength, so that $D_G \phi \to D_H \phi$ whenever $G \subseteq H$, where $G$ and $H$ are groups of agents and $D_G$ stands for the distributed knowledge of group $G$. Unfortunately, adding these additional connections would violate the termination property we used to prove the completeness theorem.
Consider, e.g., $\theta\ce \widehat{D}_{a}(D_{ab} p \vee D_{ac} q)$. Here $a$'s~knowledge  implies distributed knowledge of both the $ab$ group and the $ac$ group. That suggests that any any $ab$-cluster and any $ac$-cluster should be contained in some $a$-cluster. Moreover, by transitivity, any chain of $ab$- and $ac$-clusters should fully be within one $a$ cluster. Thus, the $\Diamond$-like formula $\theta$ should deposit the disjunction $D_{ab} p \vee D_{ac} q$ in every component of such a chain. But the $\Box$-like modalities $D_{ab} p$ and $D_{ac} q$ are prone to initiate the creation of new clusters, which would alternate forever. Because one $\widehat{D} \phi$ formula can now deposit~$\phi$ arbitrarily deep (further away from the root) within the cross-sequent, the proof search would not terminate the way it does for $\Sfive_n$, where $\Diamond_a \phi$ can deposit $\phi$ at most one level  deeper. The question of interpolation for $\Sfive$-based distributed knowledge, thus, remains future research. 

Apart from pursuing some variant or elaboration of cross-sequents, 
another possible direction could be to use indexed hypersequents from~\cite{Poggiolesi13}, which use a similar intuition to cross-sequents but use Aumann structures instead of Kripke models.

It may be worth noting that cross-sequents can be easily amended to handle $\Sfive_n$ with added isolated axioms of the type $\Box_a \phi\to \Box_b \phi$ as long as no two such axioms share an agent. Since this extension is not likely to lead to a result on distributed knowledge we chose not to pursue it.

\section{Deductive Interpolation}
\label{sect:deductive interpolation}

\Cref{sect:craig} establishes \prp{LIP} and, hence, \prp{CIP} for $\Sfive_n$. For any normal modal logic $\Lgk$, \prp{CIP} for $\Lgk$ is tantamount to interpolation for its local consequence relation. In this section, we change gears and consider interpolation for the \emph{global} consequence relations of the logics we have previously considered.

The following local deduction theorem gives a crucial link between local and global consequence. It is folklore to a great extent, but a proof may be found, for example, in \cite[Theorem 3]{FZ2021}. To state the theorem precisely, we first give a key definition: By a \emph{block} we mean any word over the alphabet $\{\Box_a \mid a\in A\}$.

\begin{proposition}\label{prop:DDT}
    Let $\Gamma\cup\{\phi,\psi\}$ be any set of formulas and $\Lgk$ be any normal logic. Then $\Gamma,\phi\vdash_\Lgk\psi$ if and only if there exists an integer $m\geq 0$ and blocks $B_1,\ldots,B_m$ such that $\Gamma\vdash_\Lgk (\phi\wedge B_1\phi\wedge\ldots\wedge B_m\phi)\to\psi$.
\end{proposition}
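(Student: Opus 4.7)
The plan is to establish the two directions of the equivalence separately. The right-to-left direction follows almost immediately from the defining rules of the global consequence relation, while the left-to-right direction proceeds by induction on the length of a global derivation.

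For the direction from right to left, suppose $\Gamma\vdash_\Lgk(\phi\wedge B_1\phi\wedge\ldots\wedge B_m\phi)\to\psi$. Working with premises $\Gamma\cup\{\phi\}$ in the global consequence relation, one may repeatedly apply necessitation to the premise $\phi$ to derive each $B_i\phi$, conjoin these together with $\phi$, and then apply modus ponens using the assumed implication (which remains derivable from $\Gamma$ alone by monotonicity).

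For the direction from left to right, assume $\Gamma,\phi\vdash_\Lgk\psi$ and induct on the length of a global derivation witnessing this. In the base case, $\psi$ is either a theorem of $\Lgk$, a member of $\Gamma$, or equal to $\phi$; each of these is handled by taking $m=0$ and observing that $\Gamma\vdash_\Lgk\phi\to\psi$ follows from $\Gamma\vdash_\Lgk\psi$ by propositional weakening (and is a tautology when $\psi=\phi$). The modus ponens step is routine: apply the induction hypothesis to $\chi$ and $\chi\to\psi$ and combine the resulting two antecedents into a single conjunction of block-prefixed copies of $\phi$.

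The main obstacle is the necessitation case. Suppose $\psi=\Box_a\chi$ is obtained from $\chi$ by necessitation, and by induction hypothesis $\Gamma\vdash_\Lgk (\phi\wedge B_1\phi\wedge\ldots\wedge B_m\phi)\to\chi$. Applying necessitation (valid in the global consequence relation) and then the \K\ axiom to distribute $\Box_a$ through both the implication and the conjunction in the antecedent, one obtains $\Gamma\vdash_\Lgk (\Box_a\phi\wedge \Box_aB_1\phi\wedge\ldots\wedge \Box_aB_m\phi)\to\Box_a\chi$. Weakening the antecedent by conjoining a further copy of $\phi$ yields an implication of the required form with blocks $\Box_a, \Box_aB_1,\ldots,\Box_aB_m$, completing the induction.
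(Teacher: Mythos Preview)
Your proof is correct and follows the standard argument for this local deduction theorem. Note, however, that the paper does not actually prove \cref{prop:DDT}: it states that the result is folklore and refers the reader to \cite[Theorem~3]{FZ2021} for a proof. Hence there is no in-paper proof to compare against; your induction on the length of a global derivation is precisely the expected argument, and the treatment of the necessitation case---applying necessitation under the global consequence relation, distributing $\Box_a$ via the $\K$ axiom and normality over conjunction, and then weakening the antecedent with an extra conjunct $\phi$ to restore the required shape---is the standard way through the only nontrivial step.
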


It is well known that local deduction theorems of this kind typically entail that the \prp{CIP} implies the \emph{deductive interpolation property} (\prp{DIP}), but this is only true for \emph{propositional variables}. For agents, the situation is more complicated. 

\begin{proposition}
    Suppose $\Lgk$ is any normal normal logic with \prp{CIP} for both propositional variables and agents. Then $\vdash_\Lgk$ has the following restricted form of \prp{DIP}: Whenever $\psi\vdash\phi$, there exists a formula $\delta$ such that $\prop(\delta)\subseteq\prop(\psi)\cap\prop(\phi)$ and $\agt(\delta)\subseteq\agt(\psi)$.
\end{proposition}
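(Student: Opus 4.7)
The plan is to combine the local deduction theorem (Proposition~\ref{prop:DDT}) with two carefully staged applications of the joint \prp{CIP} hypothesis.

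Given $\psi \vdash_\Lgk \phi$, Proposition~\ref{prop:DDT} yields blocks $B_1, \ldots, B_m$ with $\vdash_\Lgk \Psi \to \phi$, where $\Psi \ce \psi \wedge B_1 \psi \wedge \cdots \wedge B_m \psi$. Two syntactic observations drive the argument: $\prop(\Psi) = \prop(\psi)$, since blocks contribute no propositional atoms, while $\agt(\Psi) = \agt(\psi) \cup \bigcup_i \agt(B_i)$ can strictly exceed $\agt(\psi)$ when the derivation of $\psi \vdash_\Lgk \phi$ uses necessitation for agents outside $\agt(\psi)$.

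To secure the agent restriction, I would apply \prp{CIP} to $\vdash_\Lgk \Psi \to \phi$ rewritten as $\vdash_\Lgk \psi \to (\bigwedge_i B_i \psi \to \phi)$, using the asymmetric split $\psi \mid (\bigwedge_i B_i \psi \to \phi)$. The key absorption $\agt(\psi) \cap (\agt(\psi) \cup \bigcup_i \agt(B_i) \cup \agt(\phi)) = \agt(\psi)$ forces the interpolant $\delta_1$ to satisfy $\agt(\delta_1) \subseteq \agt(\psi)$, and $\psi \vdash_\Lgk \delta_1$ follows from the local theorem $\vdash_\Lgk \psi \to \delta_1$. However, the variable bound from this split is only $\prop(\delta_1) \subseteq \prop(\psi)$, and $\delta_1$ alone need not satisfy $\delta_1 \vdash_\Lgk \phi$; only the intermediate local theorem $\vdash_\Lgk \delta_1 \wedge \bigwedge_i B_i \psi \to \phi$ is guaranteed.

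To produce the final interpolant I would apply \prp{CIP} a second time to this intermediate theorem with the split $\delta_1 \wedge \bigwedge_i B_i \psi \mid \phi$. Since $\prop(\delta_1) \subseteq \prop(\psi)$, the resulting interpolant $\delta$ satisfies $\prop(\delta) \subseteq (\prop(\delta_1) \cup \prop(\psi)) \cap \prop(\phi) = \prop(\psi) \cap \prop(\phi)$, as desired. The deductive step $\delta \vdash_\Lgk \phi$ is immediate from $\vdash_\Lgk \delta \to \phi$, and $\psi \vdash_\Lgk \delta$ holds by chaining: $\psi \vdash_\Lgk \delta_1 \wedge \bigwedge_i B_i \psi$ (combining $\psi \vdash_\Lgk \delta_1$ with $\psi \vdash_\Lgk B_i \psi$ via necessitation) together with the local theorem $\vdash_\Lgk \delta_1 \wedge \bigwedge_i B_i \psi \to \delta$.

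The main obstacle will be preserving the agent restriction $\agt(\delta) \subseteq \agt(\psi)$ under this second interpolation: a priori the split only forces $\agt(\delta) \subseteq (\agt(\delta_1) \cup \agt(\psi) \cup \bigcup_i \agt(B_i)) \cap \agt(\phi)$, which admits block-agents from $\agt(\phi) \setminus \agt(\psi)$ even though $\agt(\delta_1) \subseteq \agt(\psi)$. Resolving this will require either a subtler interleaving of the two splits or an iterative refinement alternating between them until both restrictions stabilize; this tension is precisely what makes the agent restriction one-sided rather than symmetric, since tightening one restriction tends to relax the other.
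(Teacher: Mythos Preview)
You have been misled by a variable swap in the statement: the paper's own proof opens with ``Suppose that $\phi\vdash\psi$'' rather than $\psi\vdash\phi$, and the intended agent restriction is on the \emph{consequent} side, not the antecedent side you are targeting. With that correction the argument collapses to a single application of \prp{CIP}: from $\phi\vdash_\Lgk\psi$ the local deduction theorem gives $\vdash_\Lgk(\phi\wedge B_1\phi\wedge\cdots\wedge B_m\phi)\to\psi$, and interpolating this implication directly yields $\delta$ with
\[
\prop(\delta)\subseteq\prop(\phi)\cap\prop(\psi),
\qquad
\agt(\delta)\subseteq\agt\bigl(\phi\wedge\textstyle\bigwedge_i B_i\phi\bigr)\cap\agt(\psi)\subseteq\agt(\psi).
\]
No second pass is needed, because the blocks pollute only the antecedent side and the intersection with $\agt(\psi)$ already discards them. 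The deductive clauses $\phi\vdash_\Lgk\delta$ and $\delta\vdash_\Lgk\psi$ follow immediately from the two halves of the interpolation (using necessitation for the former).

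Your two-stage plan is an attempt to prove the genuinely harder antecedent-side restriction, and the obstacle you name in your final paragraph is real and unresolved: the second interpolation can reintroduce block-agents lying in $\agt(\phi)\setminus\agt(\psi)$, and nothing in your sketch prevents this. The paper does not claim to overcome this; the sentence immediately following the proof explicitly says that the additional agents introduced by the local deduction theorem \emph{preclude} deriving full agent interpolation for global consequence by the usual argument. So the gap you flag is not a fixable detail but exactly the limitation the proposition is meant to record, which is why the restriction is one-sided.
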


\begin{proof}
    Suppose that $\phi\vdash\psi$. Applying \cref{prop:DDT}, there exists $n\geq 0$ and blocks $B_1,\ldots,B_m$ such that $\vdash_\Lgk(\phi\wedge B_1\phi\wedge\ldots\wedge B_m\phi)\to\psi$. By \prp{CIP}, we obtain a formula~$\delta$ such that $\vdash_\Lgk (\phi\wedge B_1\phi\wedge\ldots\wedge B_m\phi)\to\delta$, $\vdash_\Lgk\delta\to\psi$, and 
    \begin{gather*}
    \prop(\delta)\subseteq\prop(\phi\wedge B_1\phi\wedge\ldots\wedge B_m\phi)\cap\prop(\psi)=\prop(\phi)\cap\prop(\psi),
    \\
    \agt(\delta)\subseteq\agt(\phi\wedge B_1\phi\wedge\ldots\wedge B_m\phi)\cap\agt(\psi)\subseteq\agt(\psi). 
    \end{gather*}
\end{proof}

As the foregoing proof illustrates, the introduction of additional agents in the left-to-right direction of the local deduction theorem precludes deriving agent interpolation for the global consequence from agent interpolation for the local consequence by the usual argument. 

\section{Conclusion and Future Work}
We have developed a novel proof formalism for multi-agent $\Sfive_n$ and used it to prove the Lyndon interpolation property for propositional atoms simultaneously with the (Craig) interpolation property for agents. It is interesting to note that proof-theoretic method does not natively support agent interpolation and requires a bit of processing.

Unfortunately, similar to previous attempts, this calculus does not directly lead to distributed knowledge based on $\Sfive$. Apart from adapting cross-sequents to distributed knowledge, one can attempt using more general labelled sequents~\cite{Negri05,NegrivP11}, although it is unlikely that termination problem there would be simpler to solve than for the more structured cross-sequents. Perhaps, the interconnected distributed knowledge modalities require their own special sequent structure reflecting their complexity.
Another natural direction of future research is to extend this formalism to weaker extensions of multi-agent {\sf K5}, all of which share the cluster-type semantics with~$\Sfive$.

Although the focus of this paper has been on interpolation, the proof-theoretic formalism introduced here is also relevant to complexity. This will be the subject of future work.

\bibliographystyle{abbrvurl}
\bibliography{tableaux25}
\end{document}